\newif\ifJournal
\newtheorem{theorem}{Theorem}[section]
\newcommand{\refine}{\mathop{\mathrm{refine}}}
\newcommand{\oversized}{\mathop{\mathrm{oversized}}}
\newcommand{\distance}{\mathop{\mathrm{distance}}}
\let\oldendproof\endproof
\def\endproof{\qed\oldendproof}
\begin{document}

\title{Diamond-Kite Adaptive Quadrilateral Meshing\thanks{This research was supported in part by the National Science Foundation under grants 0830403 and 1217322, and by the Office of Naval Research under MURI grant N00014-08-1-1015. Some of these results appeared in preliminary form in a paper in the 21st International Meshing Roundtable, from which this paper is adapted. We thank Scott Mitchell and Damrong Guoy for helpful discussions.}}

\ifJournal
\author{David Eppstein}
% Use \authorrunning{Short Title} for an abbreviated version of
% your contribution title if the original one is too long
\institute{Department of Computer Science, Univ. of California, Irvine\\
\email{eppstein@ics.uci.edu}}

\date{Received: date / Accepted: date}
% The correct dates will be entered by the editor

\else
\author{David Eppstein\\
Department of Computer Science, Univ. of California, Irvine}
\date{ }

\maketitle

\begin{abstract}
We describe a family of quadrilateral meshes based on \emph{diamonds}, rhombi with $60^\circ$ and $120^\circ$ angles, and \emph{kites} with $60^\circ$, $90^\circ$, and $120^\circ$ angles, that can be adapted to a local size function by local subdivision operations. Our meshes use a number of elements that is within a constant factor of the minimum possible for any mesh of bounded aspect ratio elements, graded by the same local size function, and is invariant under Laplacian smoothing.
The vertices of our meshes form the centers of the circles in a pair of dual circle packings. The same vertex placement algorithm but a different mesh topology gives a pair of dual well-centered meshes adapted to the given size function.
\end{abstract}

\ifJournal
\keywords{Adaptive meshing \and Circle packing \and Quadrilateral elements \and Well-centered mesh}
\fi

\section{Introduction}

In unstructured mesh generation for finite element simulations, it is important for
a mesh to have elements that are well-shaped (as this controls the convergence of the method), small enough to fit the features of the problem domain and its solution, and yet not so small that there are an unnecessarily large number of elements, slowing down the time per iteration of the simulations. 
In this paper we describe a recursive subdivision algorithm for generating quadrilateral meshes that are guaranteed to have all of these properties: they have bounded aspect ratio elements, the elements are sized to match to a given local size function, and the meshes are guaranteed to use a number of elements that is within a constant factor of the minimum number required by that size function.  The elements of our meshes have two shapes:  \emph{diamonds}, rhombi with $60^\circ$ and $120^\circ$ angles, and \emph{kites} with $60^\circ$, $90^\circ$, and $120^\circ$ angles; therefore, we call our meshes \emph{diamond-kite meshes}. They may be generated in time linear in the number of mesh elements, and adapted by local refinement and coarsening when the size function changes.

As well as having bounded aspect ratio elements and having an approximately-minimum number of elements, diamond-kite meshes obey a number of other interesting properties:
\begin{itemize}
\item It is possible to place circles centered at each vertex of a diamond-kite mesh in such a way that pairs of circles cross at right angles if they correspond to adjacent vertices in the mesh, are tangent to each other if they correspond to nonadjacent vertices of a common quadrilateral, and otherwise are disjoint from each other. This system of circles forms a \emph{circle packing} of a type studied by Thurston~\cite{Thu-MSRI-02}; circle packings with more irregular structures have previously been used in meshing, but this is the first known mesh that corresponds to this type of highly structured circle packing.
\item The set of diamond-kite meshes has the structure of a distributive lattice, in which every pair of meshes has a coarsest common refinement and a finest common coarsening.
\item The faces of a diamond-kite mesh may be colored with three colors, a property that is useful in scheduling parallel updates to the values stored at mesh elements~\cite{BenFlaKri-AMHCS-92,BerEppHut-Algo-02} and that is not true of all quadrilateral meshes.
\item Diamond-kite meshes are invariant under Laplacian smoothing, a commonly used technique for mesh improvement in which each vertex of a mesh is moved to the centroid of its neighbors. Thus, every diamond-kite mesh forms a Tutte embedding~\cite{Tut-PLMS-63} of its underlying graph.
\item The vertices of a diamond-kite mesh can be used to form two dual \emph{well-centered meshes} with polygonal elements that have up to six sides. In these meshes, each mesh vertex lies in the interior of its dual face, and the primal and dual edges cross each other at right angles, properties that are important in certain numerical methods~\cite{VanHirGuo-SJSC-09}. In previous methods for well-centered meshing, one of the two dual meshes was a triangulation and the other a Voronoi diagram, but the two dual meshes generated from a diamond-kite mesh are both of the same type as each other.
\end{itemize}

\subsection{Related work}

As a recursive subdivision scheme based on a regular tiling of the plane, our meshing method bears a strong resemblance to meshing based on \emph{quadtrees}, a recursive subdivision of squares into smaller squares.
The first triangular mesh generation algorithms to provide theoretical guarantees on both the element aspect ratio and the total number of elements for a given local size function used quadtrees~\cite{BerEppGil-JCSS-94} and they had long been used in meshing prior to its theoretical justification~\cite{YerShe-CGA-83,FreMar-IMR-98}. However, in contrast to the subdivision we use, quadtrees are not usually used as meshes directly, because of the potentially large number of neighbors of each cell; instead,  quadtree squares are typically further subdivided into mesh elements. For instance, in a \emph{balanced} quadtree (one in which neighboring squares differ in size by at most a factor of two) it is possible to partition each quadtree square into a constant number of triangular elements that all have the shape of an isosceles right triangle, giving a triangle mesh with bounded aspect ratio elements.

Other recursive subdivision schemes have also previously been used in meshing, including the recursive subdivision of triangles into four smaller triangles used by the Sloan digital sky survey~\cite{SzaKunTha-SIGMOD-00}, and the hexagon-based meshing algorithms of Su{\ss}ner, Greiner, Liang, and Zhang~\cite{SusGre-IMR-09,LiaZha-CMAME-11}. These methods, like ours, are based on a triangular or hexagonal lattice, but they scale this lattice by factors of two at each level of subdivision whereas we use factors of $\sqrt 3$. Other subdivision schemes related to the methods described here include honeycomb refinement~\cite{DynLevLiu-CAD-92,Zor-SIGGRAPH-99} and $\sqrt 3$ refinement~\cite{Kob-SIGGRAPH-00,Zor-SIGGRAPH-99}; however, these schemes are typically applied to every element of a structured or semistructured mesh rather than (as here) to selected elements of an unstructured mesh.

Our method is also closely related to the use of circle packings in meshing. Several unstructured mesh generation algorithms work by packing circles into the domain to be meshed, and then using the positions of the circles to guide the generation of a final mesh. The circle packing phases of these algorithms have been based on multiple ideas, including placing circles one at a time using a greedy algorithm~\cite{BerMitRup-DCG-95,BerEpp-IMR-97,Epp-IJCGA-97,LiTenUng-IJNME-00,LiTenUng-IMR-99,WanMinLo-AES-07,LiuLiChen-AMS-08},  physical simulation~\cite{ShiGos-SMA-95}, and the decimation of quadtree-based oversampling schemes~\cite{MilTalTen-STOC-95,MilTalTen-SJNA-99}.  These algorithms have been used to find nonobtuse triangular meshes for polygonal domains~\cite{BerMitRup-DCG-95,Epp-IJCGA-97} as well as bounded-aspect-ratio triangular meshes~\cite{LiTenUng-IJNME-00} and quadrilateral meshes in which all elements belong to certain special types of quadrilateral~\cite{BerEpp-IMR-97}. The circle packings generated by these methods can be made to have radii controlled by a local size function~\cite{LiTenUng-IJNME-00,WanMinLo-AES-07}, and mesh generation techniques based on these methods can be applied in higher dimensions as well~\cite{LiTenUng-IMR-99,LoWan-CMAME-05,MilTalTen-STOC-95,MilTalTen-SJNA-99}.
However, because the circle packings  generated by these methods are irregular in structure, these methods do not use the circle centers as the only vertices of their meshes. Instead, they add additional vertices at points such as the circumcenters of the gaps between circles, and they typically also require additional case analysis to handle the different possible shapes of their gaps. In contrast, we generate a circle packing from a mesh rather than generating a mesh from a circle packing, but in so doing we obtain a more direct relationship between the mesh and the packing: the mesh vertices are exactly the centers of the packed circles.

\section{Diamond-kite meshes}

Our construction begins with the \emph{rhombille tiling} of Figure~\ref{fig:rhombille} (left), a tessellation of the plane by rhombi with $60^\circ$ and $120^\circ$ angles, formed by subdividing a hexagonal tiling into three rhombi per hexagon~\cite{ConBurGoo-TSoT-08}. Each vertex of the rhombille tiling has degree (valency) either three or six: either six rhombi meet at their acute corners, or three rhombi meet at their obtuse corners. The short diagonals of the rhombi form another hexagonal tiling in which each hexagon surrounds a degree-six vertex of the rhombille tiling (Figure~\ref{fig:rhombille}, right).

\begin{figure}[t]
\centering\includegraphics[width=1.5in]{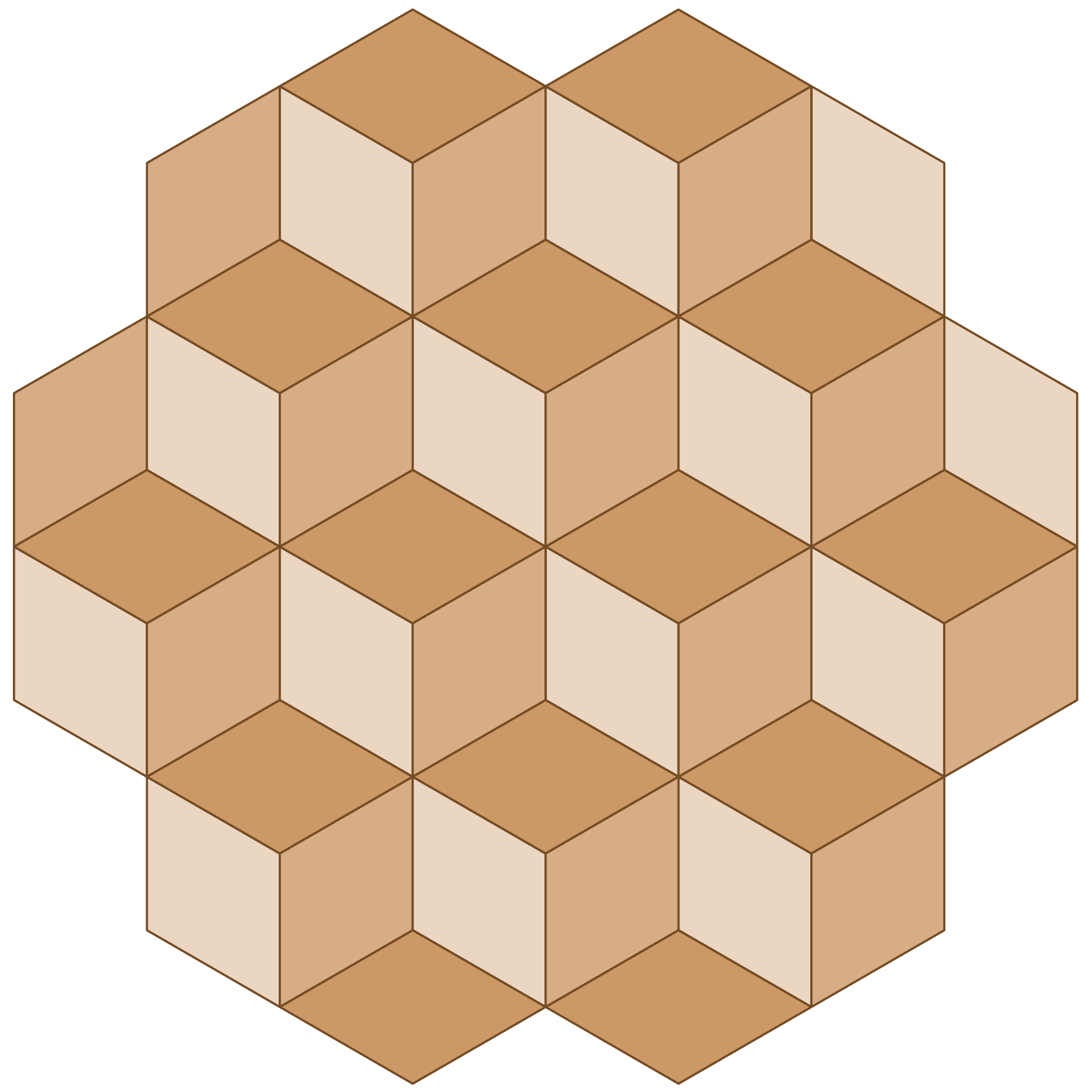}
\qquad\includegraphics[width=1.5in]{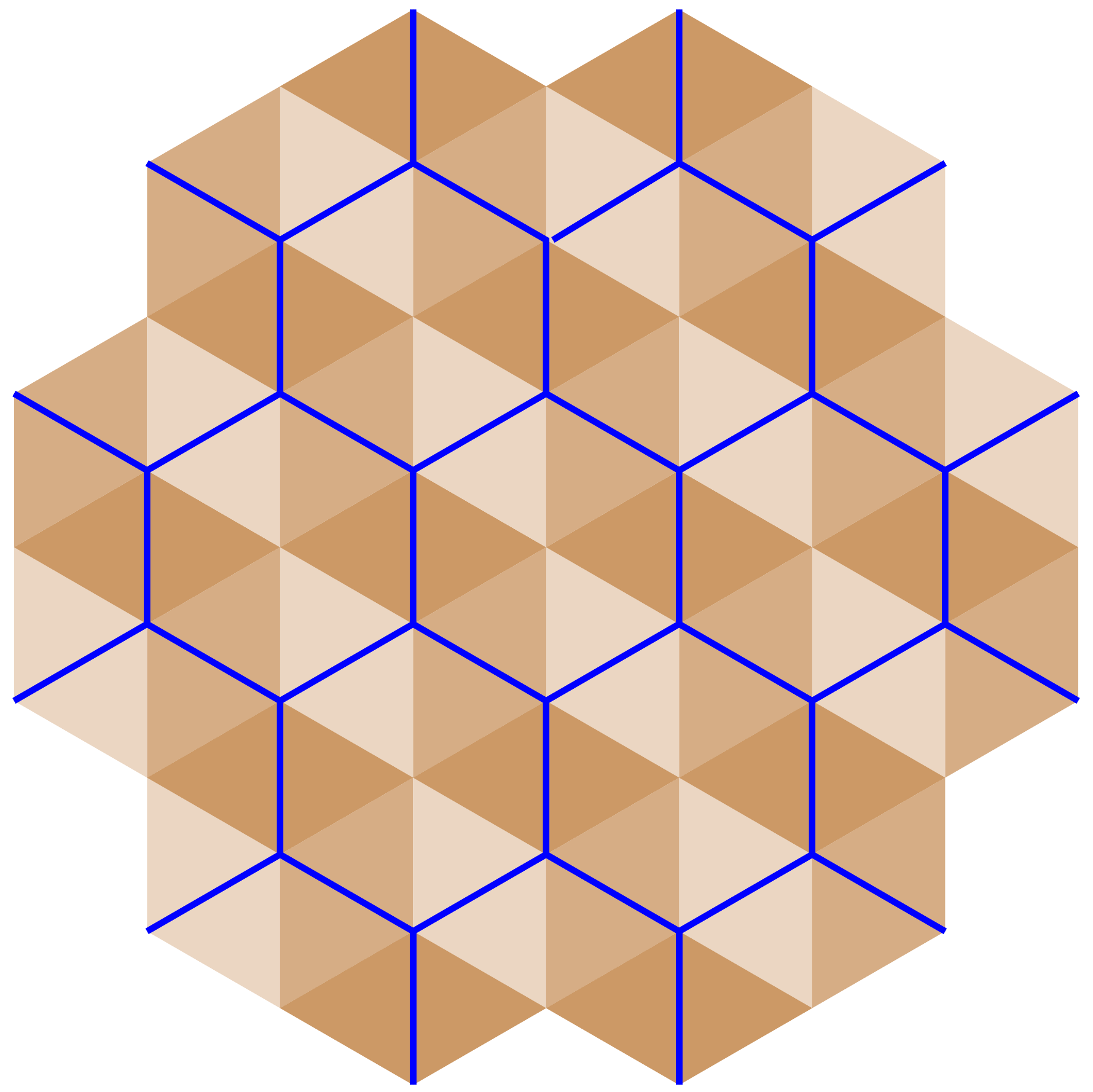}
\caption{Left: The rhombille tiling. Right: The short diagonals of the rhombille tiling form a hexagonal tiling.}
\label{fig:rhombille}
\end{figure}

\begin{figure}[t]
\centering\includegraphics[height=1.25in]{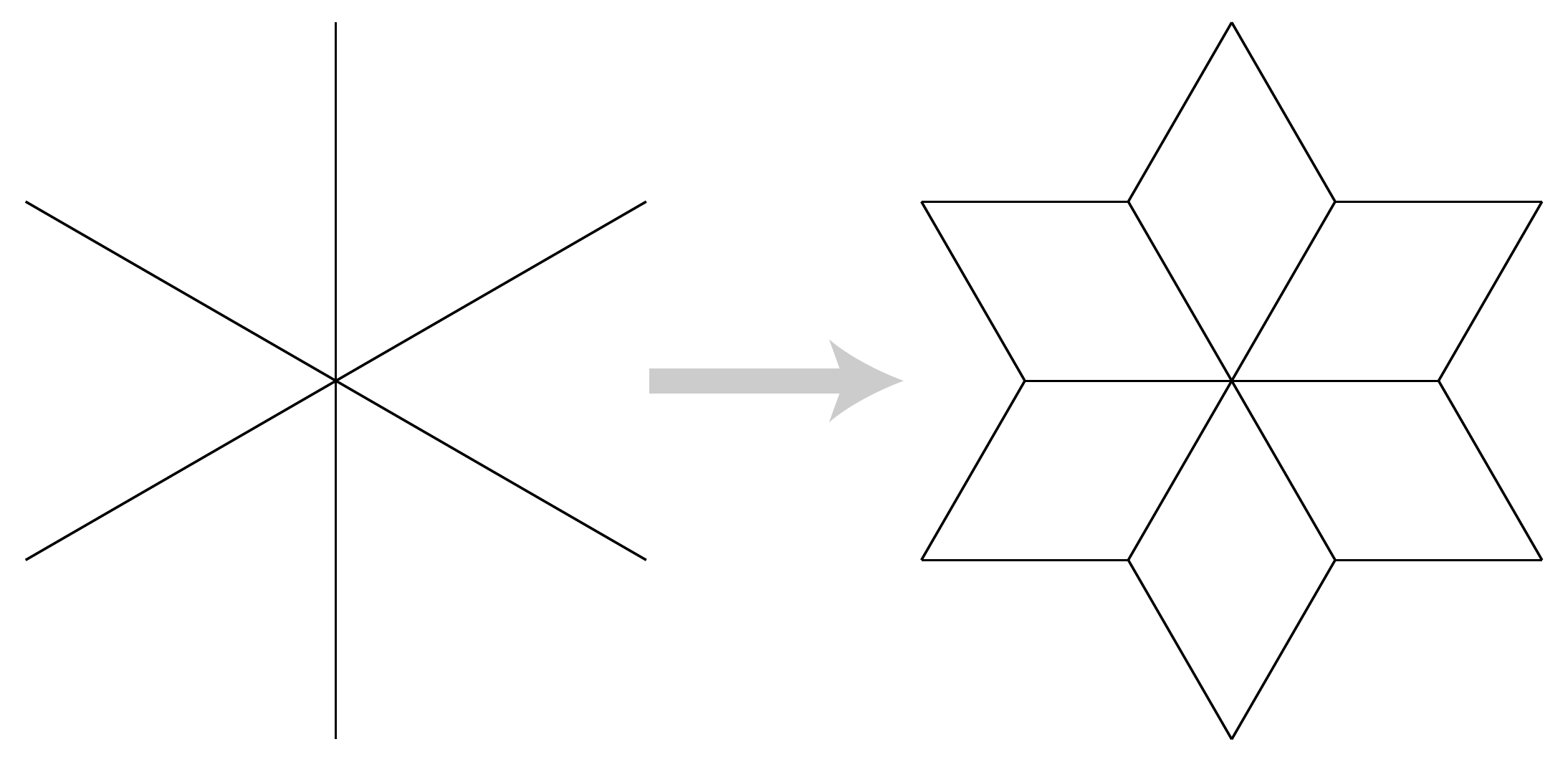}
\caption{Replacement of six edges within a hexagon (left) by six rhombi with sides shorter by a factor of $1/\sqrt 3$ than the replaced edges (right).}
\label{fig:local-replacement}
\end{figure}

Suppose that we wish to refine a rhombille tiling within a region $R$ of the plane, forming a mesh with smaller elements, while leaving the tiling unchanged outside $R$. To do so, we may approximate $R$ by a set of the hexagons formed by short diagonals, and perform the local replacement operation illustrated in Figure~\ref{fig:local-replacement} within each hexagon. This operation replaces the six edges that meet in the center of the hexagon with a network of 18 edges, shorter by a factor of $1/\sqrt 3$ from the original edges. These new edges form the boundaries of six rhombi similar to the ones in the original rhombille tiling but rotated from them by $30^\circ$ angles. Each subdivided quadrilateral crossing the boundary of the hexagon remains a quadrilateral after the replacement, so the result of the replacement is a valid quadrilateral mesh with six additional quadrilaterals for every replaced hexagon.

Figure~\ref{fig:subdivided-rhombille} shows the mesh resulting from multiple  local replacements in a rhombille tiling. When one of the replaced hexagons shares an edge with a hexagon that has not been replaced, the quadrilaterals that lie across the shared boundary of the two hexagons are kite-shaped, with vertex angles $60^\circ$, $90^\circ$, and $120^\circ$. However, when two or more replaced hexagons share an edge with each other, the quadrilaterals that lie across their shared boundary are rhombi congruent to the ones contained within each replaced hexagon. Within any region formed by multiple replaced hexagons, the smaller rhombi formed by this replacement process meet up with each other in the pattern of another rhombille tiling, rotated from the original tiling by a $30^\circ$ angle and with tiles that are smaller by a $1/\sqrt 3$ factor.

\begin{figure}[t]
\centering\includegraphics[width=2.5in]{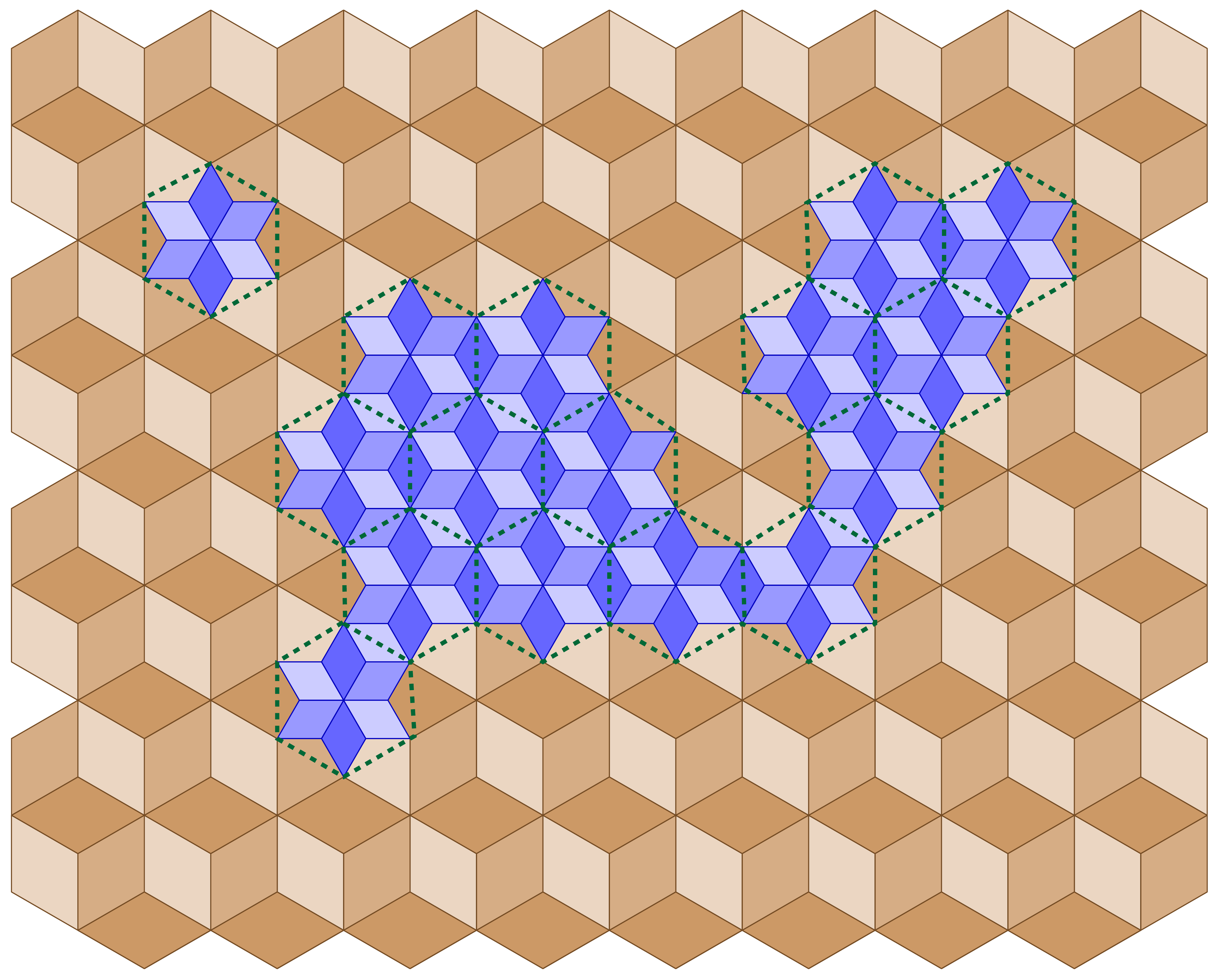}
\caption{The result of performing multiple local replacements in a rhombille tiling. Within the blue replaced region, we have another rhombille tiling, rotated from the original and with smaller tiles.}
\label{fig:subdivided-rhombille}
\end{figure}

Once this replacement has been performed, the same replacement process can be performed within the finer rhombille tiling formed within the replaced region. The degree-six vertices that can be replaced within this finer tiling are either the same degree-six vertices that were replaced previously, or the points where three replaced hexagons meet.

We call the meshes generated by any number of steps of this replacement process ``diamond-kite meshes''.

\section{Adaptation to a local size function}
\subsection{Prerequisite structure of replacement operations}

Given an initial rhombille tiling $T$, we may uniquely describe each of the local replacement steps used to form a diamond-kite mesh by a pair of parameters $(p,s)$, where $p$ is the center point of the hexagon within which the replacement happens and $s$ is its side length. A replacement step with parameters $(p,s)$ may be performed at most once within a sequence of replacements to a mesh, and it may only be performed when the mesh that results from the earlier replacements in the sequence has exactly six edges of length $s$ meeting at point $p$.

This condition that the mesh is ready for replacement $(p,s)$ to be performed can also be stated indirectly in terms of certain \emph{prerequisite} replacement steps that must be performed prior to $(p,s)$, instead of describing the configuration surrounding~$p$.  To determine these prerequisites, we may analyze cases to determine the conditions under which a replacement with parameters $(p,s)$ is possible:
\begin{itemize}
\item If $s$ is the length of the sides of the original tiles of $T$, then replacement $(p,s)$ may always be performed in every diamond-kite mesh formed from $T$ in which it has not already been performed.
\item If $s$ is smaller than the side length in $T$, and $(p,s\sqrt 3)$ is the pair of parameters for another replacement step, then replacement $(p,s)$ may be performed if and only if replacement $(p,s\sqrt 3)$ has already been performed. The reason for this is that the replacement $(p,s\sqrt 3)$ is the only possible way to incorporate into the tiling the six edges that will be replaced by $(p,s)$. Again, replacement $(p,s)$ may always be performed in every mesh in which it has not already been performed and in which $(p,s\sqrt 3)$ has been performed.
\item In the remaining case, there are three points $p_0$, $p_1$, and $p_2$ equally spaced around $p$ at distance $s\sqrt 3$ from it, such that each point $p_i$ gives the parameterization of a replacement step $(p_i,s\sqrt 3)$ and such that replacement $(p,s)$ may be performed if and only if all three of $(p_i,s\sqrt 3)$ have already been performed. Each of the three replacements $(p_i,s\sqrt 3)$ is the only possible way of incorporating into the tiling two of the six edges that will be replaced by $(p,s)$. As before, once it becomes possible to perform $(p,s)$, it remains possible to perform it until it actually is performed.
\end{itemize}

This prerequisite structure may be described by an infinite directed acyclic graph $G_T$ which has a vertex for each pair $(p,s)$ that defines a valid replacement step, and an edge from $(p,s)$ to each of the prerequisite replacement operations described in the case analysis above. That is, for every $(p,s)$, if $(p,s\sqrt 3)$ defines a valid replacement step then graph $G_T$ has an edge from $(p,s)$ to $(p,s\sqrt 3)$. In the case where there are three prerequisites $(p_i,s\sqrt 3)$, graph $G_T$ will instead have three edges going out of $(p,s)$, one to each of these three prerequisites.

This graph may be used to define an infinite partially ordered set $P_T$ that has an element for each pair $(p,s)$, and in which two elements $x$ and $y$ are ordered $x\le y$ whenever there is a directed path from $x$ to $y$ in $G_T$. In this partially ordered set, the set of all replacement steps that must be performed prior to step $(p,s)$ is the set $\{(p',s')\mid (p',s')<(p,s)\}$; it includes not just the immediate prerequisites to $(p,s)$, but also the prerequisites of the prerequisites, and so on.

\subsection{The lattice of mesh refinements}
If $T'$ is a diamond-kite mesh formed by refining the initial mesh $T$, then (by the analysis above) the set of replacement steps that were used to generate $T'$ from $T$ must be a finite \emph{lower set} in partial order $P_T$, that is, a set $L$ of elements with the property that, for every element $y\in L$ and for every element $x$ with $x\le y$, $x$ is also in $L$. Conversely, every finite lower set $L$ uniquely defines a diamond-kite mesh $T_L$ as a refinement of $T$, because the replacement operations in $L$ may be performed in any order that is consistent with the case analysis above; different orderings of the same operations will always lead to the same results. The orderings in which a given lower set $L$ of replacement steps may be performed are exactly the linear extensions of the partial order induced by the elements of $L$ in $P_T$, and a valid ordering for a given lower set $L$ may be calculated by applying a topological sorting algorithm to the directed acyclic graph induced as a subgraph of $G_T$ by the elements of $L$.

\begin{theorem}
The set of diamond-kite meshes formed from a given initial mesh has the structure of a distributive lattice, in which every two meshes $T_1$ and $T_2$ have a unique  finest common coarsening $T_1\wedge T_2$ and a unique coarsest common refinement $T_1\vee T_2$.
\end{theorem}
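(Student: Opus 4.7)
The plan is to leverage the correspondence, already established in the preceding paragraph, between diamond-kite refinements of $T$ and finite lower sets of the partial order $P_T$, and then invoke the standard fact that the finite lower sets of any poset form a distributive lattice under union and intersection. Concretely, I would first promote the assignment $L \mapsto T_L$ to an order isomorphism between the poset of finite lower sets of $P_T$ (ordered by inclusion) and the set of diamond-kite refinements of $T$ (ordered by refinement). The direction $L_1 \subseteq L_2 \Rightarrow T_{L_2}$ refines $T_{L_1}$ is clear because the extra replacement operations in $L_2 \setminus L_1$ can be appended to any valid ordering of $L_1$ thanks to $L_2$ being a lower set. The converse follows because if $T_{L_2}$ refines $T_{L_1}$, every replacement $(p,s)$ used to build $T_{L_1}$ leaves an imprint on $T_{L_2}$ that can only arise by first performing that same replacement, so $L_1 \subseteq L_2$.

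Next I would check the two closure properties that make lower sets into a lattice. If $L_1$ and $L_2$ are lower sets, then $L_1 \cup L_2$ is a lower set (anything below an element of the union lies below an element of one of the $L_i$, hence belongs to that $L_i$) and $L_1 \cap L_2$ is a lower set (anything below an element of the intersection lies in each $L_i$ separately). Finiteness of both is immediate from $|L_1 \cup L_2| \le |L_1| + |L_2|$. I would then define
\[
T_1 \vee T_2 := T_{L_1 \cup L_2}, \qquad T_1 \wedge T_2 := T_{L_1 \cap L_2},
\]
and observe that, since the map $L \mapsto T_L$ is an order isomorphism, these are indeed the least upper bound and greatest lower bound in the refinement order. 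Distributivity then reduces to the set-theoretic identity $L_1 \cap (L_2 \cup L_3) = (L_1 \cap L_2) \cup (L_1 \cap L_3)$, applied under the isomorphism.

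There is no real obstacle here beyond verifying that the bijection between meshes and finite lower sets established in the preceding discussion is genuinely order-preserving in both directions; once that is in hand, the theorem reduces to Birkhoff's observation that the lower sets of any poset form a distributive lattice. The only small care point is the finiteness condition, which is preserved by both union and intersection, so the lattice of finite lower sets is closed under meet and join and the correspondence transports this structure directly to the set of diamond-kite refinements of $T$.
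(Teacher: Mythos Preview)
Your proposal is correct and follows essentially the same approach as the paper: use the correspondence between diamond-kite refinements of $T$ and finite lower sets of $P_T$, observe that unions and intersections of finite lower sets are again finite lower sets giving the join and meet, and conclude distributivity from the standard Birkhoff fact about lower sets of a poset. You spell out the order-isomorphism and closure verifications in more detail than the paper does, but the argument is the same.
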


\begin{proof}
Let $T_1$ and $T_2$ be any two diamond-kite meshes formed by refining $T$, and described by the respective finite lower sets $L_1$ and $L_2$ in $P_T$. Then the sets $L_1\cap L_2$ and $L_1\cup L_2$ are also finite lower sets, describing respectively the finest mesh $T_1\wedge T_2$ from which both $T_1$ and $T_2$ can be formed by refinement, and the coarsest mesh $T_1\vee T_2$ that can be formed by refining both $T_1$ and $T_2$. In this way, as with the lower sets of every partially ordered set~\cite{Bir-DMJ-37}, the family of diamond-kite meshes can be given the structure of a distributive lattice.
\end{proof}

We remark that the same ideas of constructing an infinite graph describing the prerequisite relation between potential replacement steps, deriving an infinite partial order from the graph, and describing each possible mesh as a lower set of this partial order, can be applied equally well to describe the set of possible balanced quadtrees derived from an initial square. Thus, the set of balanced quadtrees can also be given the structure of a distributive lattice.

\subsection{Local replacement at mesh vertices}

\begin{figure}[t]
\centering\includegraphics[width=3in]{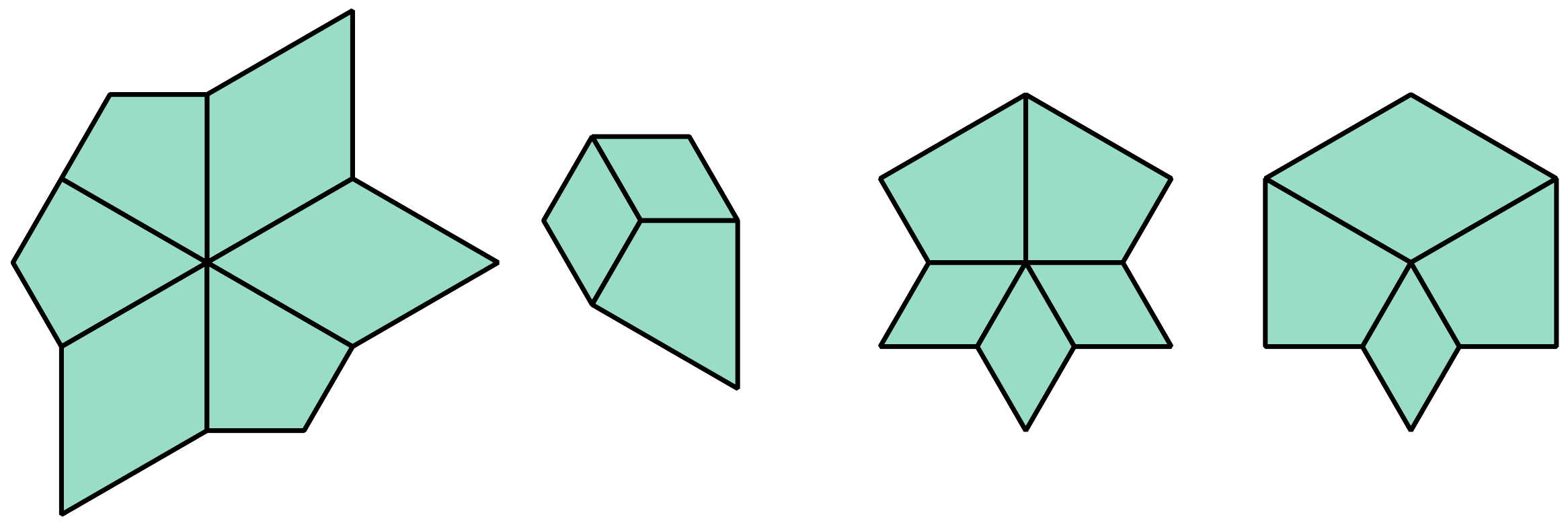}
\caption{Cases for $\refine(p)$. From left to right: (a) $p$ has degree six, and is ready for immediate replacement; (b) $p$ has degree three, and does not meet the preconditions of $\refine$; (c) $p$ has degree five, and a replacement at the $60^\circ$ kite vertex must be performed prior to a replacement at $p$; and (d) $p$ has degree four, and two $60^\circ$ kite vertices must be replaced prior to a replacement at~$p$.}
\label{fig:refine-cases}
\end{figure}

We now define a recursive subdivision algorithm, which we refer to by the subroutine name $\refine(p)$, that performs a single local replacement step at a vertex $p$ of a diamond-kite mesh, after performing all prerequisite replacements. We require, as a precondition for this algorithm, that $p$ be a $60^\circ$ vertex of at least one mesh element; the replacement performed by this subroutine will be the one parametrized by $(p,s)$, where $s$ is the length of the mesh edges on either side of the $60^\circ$ angle.

More specifically, $\refine(p)$ performs the following two steps:
\begin{enumerate}
\item For each kite that has a $90^\circ$ angle at $p$, let $q$ be the $60^\circ$ angle angle of the kite, and call $\refine(q)$ recursively.
\item Perform a replacement step at $p$.
\end{enumerate}

\begin{theorem}
A call to $\refine(p)$ performs the replacement step $(p,s)$ (where $s$ is as defined above) and all its prerequisites, but no other replacements. The time for the procedure is proportional to the number of new elements added by these replacement steps.
\end{theorem}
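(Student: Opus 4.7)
The plan is to prove the theorem by induction on the rank of $(p,s)$ in the partial order $P_T$ restricted to prerequisites that have not yet been performed. Both correctness and the time bound will rest on a geometric lemma asserting that the $90^\circ$ kite corners incident to $p$ in the current mesh correspond bijectively to the as-yet-unperformed immediate prerequisites of $(p,s)$ in $G_T$.

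First, I would establish this correspondence by case analysis matching the three prerequisite cases of the previous subsection to the four configurations of Figure~\ref{fig:refine-cases}. If $(p,s)$ has no unperformed prerequisites, then by the analysis of Section~3.1 the mesh has six edges of length $s$ meeting at $p$, so $p$ looks like case~(a) and is incident to no kite whose $90^\circ$ angle lies at $p$. If the single prerequisite $(p,s\sqrt 3)$ is still pending, the configuration at $p$ is case~(b), and the precondition of $\refine$ (that $p$ be a $60^\circ$ corner of some element) is not met, so this case cannot arise inside a well-posed call. In the remaining cases, each unperformed prerequisite $(p_i,s\sqrt 3)$ contributes exactly one kite incident to $p$ whose $90^\circ$ corner is at $p$ and whose $60^\circ$ corner is at $p_i$; these give cases~(c) and~(d) depending on how many of the three prerequisites are still pending. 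Thus the $60^\circ$ kite corners $q$ identified in step~1 of $\refine(p)$ are precisely the parameter points of the unperformed immediate prerequisites.

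With this correspondence, correctness follows by induction on the number of unperformed elements of $\{(p',s')\mid (p',s')\le (p,s)\}$. In the base case (no pending prerequisites), the loop in step~1 is empty and step~2 performs exactly $(p,s)$. In the inductive step, each recursive call is made at a vertex $q$ parametrizing a strict prerequisite of $(p,s)$; the precondition of $\refine$ holds at $q$ because $q$ is the $60^\circ$ corner of the kite under consideration; and by the inductive hypothesis each call performs its own replacement together with exactly its prerequisites, and nothing else. After all recursive calls return, all immediate prerequisites of $(p,s)$ have been performed, so by the geometric lemma $p$ now has the degree-six configuration of case~(a), and step~2 validly performs $(p,s)$. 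The union of the steps performed is therefore exactly the lower set of $(p,s)$ in $P_T$, as required.

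For the running time, each replacement step inserts a constant number of new mesh elements (six new quadrilaterals) and modifies only a constant-size neighborhood of a single vertex; the at-most-three kites with a $90^\circ$ corner at $p$ can be identified in $O(1)$ time by scanning the elements incident to $p$. Summing the $O(1)$ cost per invocation over all invocations gives total running time proportional to the number of replacement steps performed, which in turn is proportional to the number of new elements added. The main obstacle I anticipate is the geometric case analysis underlying the $90^\circ$-kite-to-prerequisite correspondence: verifying that performing a prerequisite $(p_i,s\sqrt 3)$ causes the corresponding kite at $p$ to disappear and increases the count of length-$s$ edges meeting at $p$ by two, eventually bringing $p$ to the six-edge configuration required for step~2. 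Once this local picture is nailed down, the induction and the time accounting are routine.
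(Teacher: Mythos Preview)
Your proof follows essentially the paper's own approach: a case analysis on the local configuration at $p$ establishing that the recursive calls in step~1 correspond exactly to the unperformed immediate prerequisites of $(p,s)$, after which correctness follows by induction and the time bound by the same constant-work-per-replacement accounting. The paper organizes its case analysis by the degree of $p$ (six, three, five, four), whereas you organize by the prerequisite type of Section~3.1; these are equivalent framings.

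Two small inaccuracies in your case-matching are worth fixing. First, the correspondence between kites with a $90^\circ$ corner at $p$ and unperformed immediate prerequisites is not a bijection: in case~(c) the single unperformed prerequisite $(q,s\sqrt3)$ is witnessed by \emph{two} kites, both having their $60^\circ$ corner at the common point~$q$. What you actually need, and what is true, is that the set of $60^\circ$ vertices $q$ arising in step~1 equals the set of centers of unperformed immediate prerequisites. Second, case~(b) is not the type-2 situation ``single prerequisite $(p,s\sqrt3)$ pending'': a degree-three vertex is one of the six non-central points created by a replacement, so the next step $(p,s)$ there has type-3 prerequisite structure with all three $(p_i,s\sqrt3)$ still pending. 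The type-2 situation with $(p,s\sqrt3)$ unperformed cannot occur under the precondition at all, since the length-$s$ edges creating a $60^\circ$ angle at $p$ appear only after $(p,s\sqrt3)$ is performed. Neither slip invalidates your argument, since your induction only uses the set-level correspondence, but correcting them will make the geometric lemma you rely on precise.
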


\begin{proof}
To verify that these steps perform the prerequisites of replacement step $(p,s)$, and step $(p,s)$ itself, but do not perform any other replacement steps, we go through a case analysis that examines the possible local configurations near $p$.
\begin{itemize}
\item If $p$ has degree six, then the most recent replacement step that affected the edges incident to $p$ must either have replaced a hexagon with $p$ at its center, or have been the third of three replacements of hexagons meeting at~$p$. In this case, $p$ is already surrounded by diamonds and/or kites having $60^\circ$ angles at $p$, as shown in Figure~\ref{fig:refine-cases}(a). In this case, it is possible to perform a local replacement step at $p$ without performing any other replacements. Since there are no $90^\circ$ angles at $p$, the first step of $\refine$ makes no local calls, and $\refine$ correctly makes only the replacement step $(p,s)$ itself.
\item If $p$ has degree three, then it must be the case that the most recent replacement step at $p$ replaced a hexagon for which $p$ was interior but not central. Then $p$ is a $120^\circ$ vertex of three elements, either two diamonds and one larger kite (as in  Figure~\ref{fig:refine-cases}(b)) or three diamonds. In this case, it is not possible for the precondition of the $\refine$ algorithm to be met, because there is no $60^\circ$ angle at $p$. Thus, in this case it does not matter what might happen when $\refine(p)$ is called.
\item If $p$ has degree five, then it must be the case that the most recent replacement to affect the neighborhood of $p$ was the replacement of a hexagon having $p$ as a vertex, and additionally this must have been the second replacement of the three hexagons of that size meeting at $p$. Then the neighborhood of $p$ consists of three elements with $60^\circ$ angles (diamonds or kites within the two replaced hexagons) and two elements with $90^\circ$ angles (necessarily kites in the third hexagon); see  Figure~\ref{fig:refine-cases}(c). Let $q$ be the shared $60^\circ$ vertex of these two kites. When $\refine(p)$ is called, it will recursively call $\refine(q)$, performing the prerequisite replacement step, after which we may perform a replacement step at $p$.
\item If $p$ has degree four, then $p$ was a vertex of the hexagon for the most recent replacement at $p$, and this was the first replacement of the three hexagons of that size meeting at $p$. In this case $p$ has a neighorhood with one $60^\circ$ angle (a diamond or kite in the replaced hexagon), two $90^\circ$ angles (kites in the two unreplaced hexagons), and one $120^\circ$ angle (a rhomb or kite overlapping the two unreplaced hexagons), as is depicted in Figure~\ref{fig:refine-cases}(d). The call to $\refine(p)$ will recursively call $\refine$ for each of the two $60^\circ$ vertices of the kites with incident $90^\circ$ angles, which will perform the two prerequisite replacement steps to $(p,s)$.
\end{itemize}

The recursion in step~1 of~$\refine$ necessarily terminates, because each recursive call leads to a replacement step on a larger hexagon.  Each recursive call adds elements to the mesh for the replacement step it performs, so the total time for this recursive procedure is linear in the total change to the number of elements in the mesh. The case analysis above shows that this recursion performs exactly the replacement steps that are predecessors of $(p,s)$ in the partial order $P_S$ and that had not already been performed at the start of the recursion.
\end{proof}

\subsection{Local size functions}
\label{sec:refinement}
We define a \emph{local size function} to be a function $\sigma$ that maps each point $p$ of the plane (or of a subset of the plane to be meshed) to a positive real number $\sigma(p)$, specifying the largest allowable side length of a mesh element containing~$p$.\footnote{It would be equivalent to within constant factors to specify the maximum allowable area, perimeter, diameter, or circumradius of the element, but side length turns out to be more convenient for our purposes, because it leads to fewer ambiguities about which replacement steps are necessary.} We assume that access to $\sigma$ is via a subroutine $\oversized(Q)$ that takes as argument a quadrilateral $Q$ and returns a Boolean value, true if $Q$ contains a point $p$ for which $\sigma(p)$ is less than the side length of $Q$ and false otherwise. Our task is to find a mesh that is as coarse as possible subject to the constraint that $\oversized$ returns false for all mesh elements.

To do so, we perform the following steps:
\begin{enumerate}
\item Construct an initial coarse mesh.
\item Initialize a queue $Q$ of unprocessed quadrilaterals, containing all quadrilaterals in the initial mesh.
\item Repeat until $Q$ is empty:
\begin{enumerate}
\item Find and remove a quadrilateral $q$ from $Q$.
\item If $q$ is a kite and $\oversized(q)$ returns true, let $p$ be the $60^\circ$ vertex of $q$ and call $\refine(p)$.
\item If $q$ is a diamond, let $q_1$ and $q_2$ be the two kites contained within $q$ that have the same maximum side length as $q$, and let $p_1$ and $p_2$ be their two  $60^\circ$ vertices. For each $i$ in the set $\{1,2\}$, if $\oversized(q_i)$ returns true, call $\refine(p_i)$.
\end{enumerate}
\end{enumerate}

\begin{theorem}
\label{thm:refinement}
The mesh refinement procedure described above finds the coarsest mesh consistent with the given size function, and takes time proportional to the size of that mesh.
\end{theorem}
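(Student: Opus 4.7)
My plan is to verify three things in sequence: the algorithm terminates producing a mesh $T$ with no oversized elements (consistency), every replacement step the algorithm performs appears in every consistent mesh (minimality), and the work is bounded by the output size. I assume throughout, as is standard for worklist algorithms and necessary for correctness, that each new quadrilateral produced by a \refine call is enqueued in~$Q$.

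For consistency, note that when $Q$ becomes empty, every quadrilateral of $T$ has been processed: kites in $T$ passed $\oversized=\mathrm{false}$ in step~3(b), and for each diamond $q$ in $T$, both $\oversized(q_1)$ and $\oversized(q_2)$ returned false in step~3(c). The key geometric observation is that $q_1$ and $q_2$ together cover the diamond $q$: their long sides coincide with the sides of $q$, and each kite's $120^\circ$ vertex lies at distance $2s/\sqrt{3}$ from one $60^\circ$ vertex of $q$ along the long diagonal (whose length is $s\sqrt{3}$), so the two kites overlap in the middle and together fill~$q$. Consequently $\oversized(q)$ holds if and only if $\oversized(q_1)$ or $\oversized(q_2)$ holds, and no element of $T$ is oversized.

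For minimality, let $T^*$ be any consistent mesh, described by a lower set $L^* \subseteq P_T$, and argue by induction on the order of the algorithm's replacement steps that each one lies in $L^*$. The kite case (step~3(b)) is immediate: an oversized kite cannot appear in any consistent mesh, and the only replacement that destroys it is the one at its $60^\circ$ vertex. For the diamond case (step~3(c)), suppose $q_i$ is oversized and the algorithm calls $\refine(p_i)$ at the $60^\circ$ vertex of $q_i$. If the corresponding replacement were absent from $L^*$, then either $q$ would remain in $T^*$ (oversized because $q_i \subseteq q$, a contradiction) or $q$ would have been refined only at the other $60^\circ$ vertex of $q$, in which case the kite appearing in $T^*$ is exactly $q_i$ itself, still of side length $s$ and still containing the oversized point (a contradiction). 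Because each \refine call also performs all its prerequisite replacements by Theorem~2, and $L^*$ is closed under predecessors, every step taken by the algorithm lies in $L^*$. Together with consistency, this shows the algorithm's lower set $L$ equals the minimum consistent lower set, so $T$ is the unique coarsest consistent mesh.

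The time bound then follows from Theorem~2: each quadrilateral enters $Q$ at most once and is processed in constant time outside of \refine, while the \refine invocations collectively take time proportional to the number of new elements they create, which is at most the size of~$T$. The main obstacle is the diamond necessity argument, which depends on verifying the covering identity $q_1 \cup q_2 = q$ and on the observation that refining only at the ``wrong'' $60^\circ$ vertex produces $q_i$ itself as a mesh element with the oversized point untouched.
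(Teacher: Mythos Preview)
Your proof is correct and follows the same approach as the paper's: you establish consistency via the covering identity $q_1\cup q_2=q$, minimality by arguing that the kite associated with each $\refine$ call must be destroyed in any consistent mesh, and the time bound from the linear cost of $\refine$. You supply more detail than the paper---the explicit geometric computation for the covering, the inductive framing for minimality, and the case analysis for the diamond---but the underlying argument is the same.
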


\begin{proof}
Note that, in the diamond case,
$$\oversized(q)=\oversized(q_1)\vee\oversized(q_2),$$
because the two kites have the whole diamond as their union. Therefore, if $\oversized(q)$ is true then one or both kites will also be oversized, and processing $q$ will cause it to become subdivided. It follows that the algorithm can only terminate when there are no more oversized quadrilaterals in the refined mesh that it has produced.

Whenever this adaption procedure makes a call to $\refine(p)$, leading (after some recursive calls) to a replacement step at vertex $p$, the same replacement step must be performed in every diamond-kite mesh that is a refinement of the same initial mesh and obeys the size function $\sigma$, because otherwise the kite associated with $p$ would remain in the mesh and would have too large a side length. Therefore, the result of the refinement procedure described above is the coarsest diamond-kite mesh consistent with the size function. Each step either removes a quadrilateral from the queue without adding any others, or it takes time linear in the number of elements added, so the total time for this adaption procedure is linear in the size of the final mesh it produces.
\end{proof}

\subsection{Dynamic adaptation}

For a simulation in which the size function changes over time, it may be desirable to adapt the mesh to the new size function after each time step. To do so, we may first refine the mesh as described above (using the mesh from the previous time step as the initial mesh in the refinement algorithm), adding new smaller elements in places where the size function has diminished, and then coarsen it using a similar algorithm, removing elements in places where the size function has increased.

Coarsening (the reverse of refinement) may be performed at any vertex $v$ of the mesh that is surrounded by six diamonds, such that the none of the six kites surrounding $v$ of the next larger size than the diamonds is oversized. In this case we say that $v$ is \emph{coarsenable}; the coarsening step at $v$ consists of removing the edges of the surrounding six diamonds and replacing them by the long diagonals of the diamonds, reversing the arrow in the transformation depicted in Figure~\ref{fig:local-replacement}.

The overall coarsening algorithm performs the following steps:
\begin{enumerate}
\item Test for each vertex $v$ in the initial mesh whether $v$ is coarsenable, and initialize a queue $Q$ of coarsenable vertices.
\item Repeat until $Q$ is empty:
\begin{enumerate}
\item Find and remove a vertex $v$ from $Q$.
\item Perform a coarsening step at $v$.
\item For each neighbor $u$ of $v$ in the coarsened grid, test whether $u$ is now coarsenable, and if it is then add $u$ to $Q$.
\end{enumerate}
\end{enumerate}

\begin{theorem}
The mesh resulting from this adaptation procedure is the coarsest mesh consistent with the new size function, and the time for the procedure is proportional to the sum of the sizes of the old and new meshes.
\end{theorem}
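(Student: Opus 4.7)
The plan is to show (i) the coarsening phase terminates at the unique coarsest lower set $L^*$ in $P_T$ whose mesh is consistent with the new size function, and (ii) the overall running time is $O(|L_{\mathrm{old}}|+|L_{\mathrm{new}}|)$. Write $L_1$ for the lower set after the refinement phase.

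A preliminary lemma I would establish is that $L^*$ is contained in every consistent lower set, so that ``coarsest'' is well-defined and genuinely unique. To prove it, I would let $L$ be any consistent lower set, suppose $L^*\setminus L\ne\emptyset$, pick $(p,s)$ minimal in $L^*\setminus L$ under the $P_T$ order, note that all prerequisites of $(p,s)$ then lie in $L$ by minimality, and trace through the case analysis in the from-scratch refinement algorithm (Theorem~\ref{thm:refinement}) that inserted $(p,s)$ into $L^*$ to exhibit an oversized kite in the mesh of $L$ with its $60^\circ$ vertex at $p$, contradicting consistency of $L$. In particular $L_1\supseteq L^*$ when coarsening begins.

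The heart of the proof is the equivalence: if $L$ is a consistent lower set containing $L^*$, then a vertex $v$ is coarsenable with parameter $s$ if and only if $L\setminus\{(v,s)\}$ is again a consistent lower set. In the ``only if'' direction, the six-diamonds condition is exactly what makes $(v,s)$ maximal in $L$, so $L\setminus\{(v,s)\}$ is a lower set, and the six-kite check is precisely what is needed to rule out the one place where a new oversized element could appear. The converse uses the same two observations in reverse. Given this equivalence, the invariant $L\supseteq L^*$ is maintained throughout the loop: each coarsening step removes some $(v,s)$ for which $L\setminus\{(v,s)\}$ is consistent, so by the preliminary lemma that set contains $L^*$, whence $(v,s)\notin L^*$. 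For termination I argue the contrapositive: if the queue is empty but $L$ properly contains $L^*$, pick $(v,s)\in L\setminus L^*$ maximal in $L$ (such an element exists because $L^*$ is downward closed and $L$ is finite); by maximality the six diamonds at $v$ are intact, and $L\setminus\{(v,s)\}$ is a lower set containing $L^*$, hence a refinement of the consistent $L^*$, hence itself consistent; so $v$ is coarsenable by the equivalence, contradicting that the queue is empty.

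A separate bookkeeping step verifies the queue correctly captures every vertex that becomes coarsenable during execution: coarsening at $u$ alters only the local hexagonal patch around $u$, so coarsenability status can only change at the six new neighbors of $u$, each of which step~2(c) explicitly retests, while initially coarsenable vertices are enqueued in step~1. For the time bound, each coarsening iteration does $O(1)$ combinatorial work, makes $O(1)$ calls to $\oversized$, and removes a constant number of elements; so coarsening takes $O(|L_1|-|L^*|)$ time, and combined with the refinement cost $O(|L_1|)$ from Theorem~\ref{thm:refinement} the total is $O(|L_1|)=O(|L_{\mathrm{old}}|+|L^*|)$, matching the claimed bound. The main obstacle is the preliminary lemma characterizing $L^*$ as the unique minimum consistent lower set; once that is in hand, the invariant-and-termination argument runs cleanly.
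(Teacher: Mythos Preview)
Your overall strategy---characterising the target mesh as the minimum consistent lower set $L^*$, maintaining the invariant $L\supseteq L^*$, and proving termination by exhibiting a maximal element of $L\setminus L^*$ that must be coarsenable---is sound and considerably more explicit than the paper's proof, which simply asserts that the resulting lower set ``contains only those refinement operations that are either directly necessary\dots or that are prerequisites of these necessary refinements'' without justifying why the queue discipline in step~2(c) suffices. Your preliminary lemma and the equivalence between coarsenability of $v$ at scale $s$ and $L\setminus\{(v,s)\}$ being a consistent lower set are both correct.

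There is, however, a genuine gap in your bookkeeping step. The claim that ``coarsenability status can only change at the six new neighbors of $u$'' is false. When you coarsen at $v$, removing $(v,s)$ from $L$, the only elements whose maximality in $L$ can change are the \emph{prerequisites} of $(v,s)$ in $G_T$: either $(v,s\sqrt{3})$ (so $v$ itself, at the next larger scale) or the three $(p_i,s\sqrt{3})$ with $p_i$ at distance $s\sqrt{3}$ from $v$. None of these are among the six new neighbours of $v$, which lie at distance $s$; those neighbours are the degree-three ``new'' vertices introduced by a scale-$s\sqrt{3}$ replacement, and as such are never themselves centres of any replacement in $L$, hence never coarsenable. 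A concrete counterexample to the algorithm as literally stated: take $L=\{(p_0,t_0),(p_0,t_0/\sqrt{3})\}$ with $L^*=\emptyset$. Initially only $p_0$ is coarsenable; after one coarsening step the algorithm tests the six neighbours of $p_0$, finds none coarsenable, and halts with $L=\{(p_0,t_0)\}\ne L^*$. So either step~2(c) must be read as testing $v$ and the $p_i$ (the centres of the prerequisite hexagons) rather than the immediate mesh neighbours, or an additional argument is needed; as written, your bookkeeping step does not establish that the queue captures every newly coarsenable vertex, and the termination argument does not close. The paper's own proof does not engage with this point at all.
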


\begin{proof}
The mesh resulting from this adaptation procedure corresponds to a lower set $L$ in the partially ordered set $P_T$, such that $L$ contains only those refinement operations that are either directly necessary to eliminate an oversized quadrilateral, or that are prerequisites of these necessary refinements. Therefore, the mesh resulting from this adaptation process is the unique coarsest mesh that conforms to the local size function, the same mesh that would be generated by applying size refinement to a fixed initial mesh.

The time for adapting a mesh to the changed local size function is proportional to the size of the old mesh (to determine the initial queue for the refinement and unrefinement algorithms) plus the number of refinement and unrefinement steps by which the old and new meshes differ. Each refinement step increases the size of the mesh, so the total spent in the refinement phase is proportional to the number of elements that belong to the new mesh and not the old one. By a similar (but time-reversed) argument, the time spent in the unrefinement phase is proportional to the number of elements that belong to the old mesh and not the new one. Thus, the total time is as stated.
\end{proof}

\subsection{Implementation}

\begin{figure}[t]
\centering\includegraphics[width=3in]{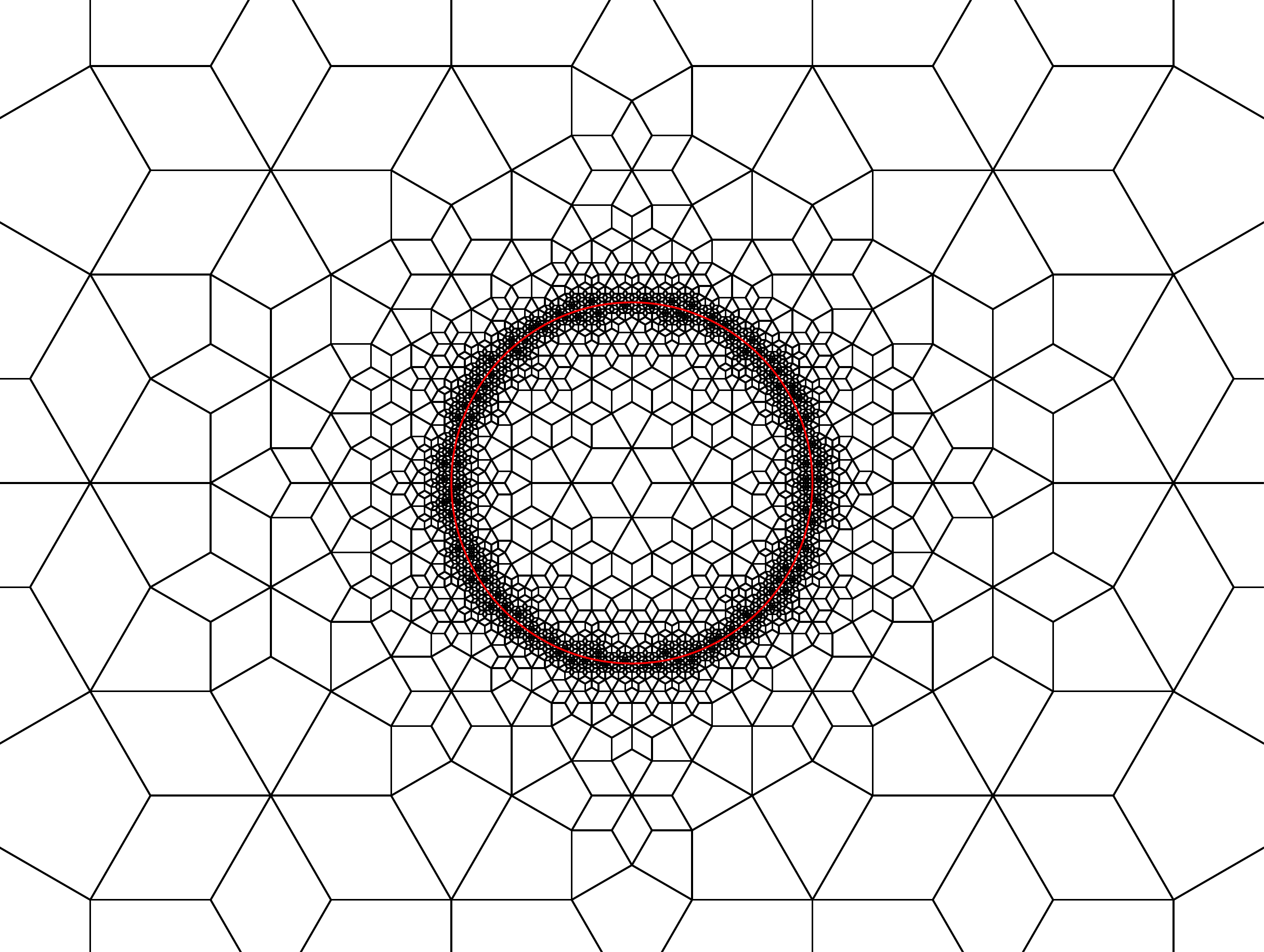}
\caption{The result of applying the refinement algorithm to a size function that measures the distance to a circle.}
\label{fig:circle-limit}
\end{figure}

As a proof of concept, we implemented the refinement algorithm described in Section~\ref{sec:refinement}, using a hardcoded local size function that measures the distance of its argument to a circle. Because of the simple form of this size function, we used a simplified variant of the adaptation algorithm that measures the size function only at vertices of the mesh when deciding whether to call $\refine$, rather than measuring the minimum value of the size function within quadrilaterals of the mesh. Our implementation uses approximately 15 lines of Python code to set up the initial mesh and queue, four lines for the size function, 22 lines for the $\refine$ subroutine, five lines for the refinement algorithm that adapts the mesh to the local size function, and eight lines to output the mesh to the SVG graphics format.  The result, a mesh that is coarse far from the circle and fine near the circle, is depicted in Figure~\ref{fig:circle-limit}.

\section{Properties}
\subsection{Size and length optimality}

Following Ruppert~\cite{Rup-Algs-95}, we may use \emph{local feature size} to prove that the methods for fitting a diamond-kite mesh to a local size function discussed in the previous section produce meshes that (compared to any other mesh with quadrilaterals or triangles of bounded aspect ratio obeying the constraints of the local size function) are within a constant factor of optimal with respect both to their number of elements and to their total edge length, matching known results for quadtree meshes~\cite{BerEppGil-JCSS-94,Epp-DCG-94} and for meshes formed by Delaunay refinement~\cite{Rup-Algs-95}.

We assume that the size function $\sigma(p)$ is defined in such a way as to lead to a finite mesh, and
we define the \emph{local feature size} to be a function $\hat\sigma(p)$ that maps a point $p$ to the number
$$\hat\sigma(p)=\inf\bigl\{\distance(p,q)+\sigma(q)\bigr\}.$$
The point $q$ in the minimization ranges over the rest of the plane, but its minimum will necessarily occur within a disk of radius $\sigma(p)$ centered at $p$, because all other points lead to larger values than the value $\sigma(p)$ achieved at $q=p$. The following two observations are central to our analysis:
\begin{itemize}
\item In all meshes with bounded-aspect-ratio elements, the size of the element containing a given point $p$ is $\Omega(\hat\sigma(p))$. To see this, let $q$ be a point achieving (or approximately achieving) the minimum value in the definition of $\hat\sigma(p)$. The element containing $q$ must have size at most $\hat\sigma(p)$, and the sequence of elements crossed by the line segment from $q$ to $p$ cannot increase in size from that value by more than a constant factor before they reach $p$.
\item In the diamond-kite mesh defined from the size function $\sigma$, the size of the element containing $p$ is $O(\hat\sigma(p))$. This follows from the fact that, if $q$ is any point in the plane, the size of the smallest element of the partial order $P_T$ that is forced by the value of $\sigma(q)$ to be included in the mesh and that corresponds to a local replacement for a hexagon containing $p$ is $O(\distance(p,q)+\sigma(q)$.
\end{itemize}

\begin{theorem}
The diamond-kite mesh for a given size function has a number of elements within a constant factor of the minimum possible for any bounded-aspect-ratio mesh for the same size function.
\end{theorem}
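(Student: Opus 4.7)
The plan is a standard area-integration argument: express the element count of any bounded-aspect-ratio mesh as an integral over the meshed region $R$, apply the two preceding bullet-point estimates pointwise, and compare the resulting integrals.

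The first step is the identity
\[
|M| \;=\; \Theta\!\left(\int_R \frac{dA}{s_M(p)^2}\right),
\]
where $s_M(p)$ denotes the side length of the element of $M$ containing $p$. This follows because bounded aspect ratio forces every element $e$ to have area $\Theta(s_e^2)$, so $\sum_{e\in M} 1 = \sum_{e\in M} \operatorname{area}(e)/\Theta(s_e^2)$ matches the stated integral up to constants. The same formula applies both to $M^*$ and to any competing valid mesh~$M$.

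Next I would feed the two preceding bullet-point observations into this formula. They combine to give $s_{M^*}(p) = \Theta(\hat\sigma(p))$ for the diamond-kite mesh, and the near-infimizer-plus-walk argument sketched in the first bullet (pick $q$ approximately realizing the infimum for $\hat\sigma(p)$; use validity to bound the element size at $q$; then walk from $q$ to $p$ along the segment, with element sizes changing by only a bounded factor per step thanks to bounded aspect ratio) shows that every valid bounded-aspect-ratio mesh $M$ satisfies $s_M(p) = O(\hat\sigma(p))$. Substituting these pointwise bounds into the integral from the first step yields
\[
|M^*| \;=\; \Theta\!\left(\int_R \frac{dA}{\hat\sigma(p)^2}\right)
\qquad\text{and}\qquad
|M| \;=\; \Omega\!\left(\int_R \frac{dA}{\hat\sigma(p)^2}\right),
\]
so $|M^*| = O(|M|)$, which is exactly the claim.

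The main obstacle I anticipate is the sum-to-integral calibration in the first step: one has to verify that the integrand $1/s_M(p)^2$ is roughly constant across each element, so that the Riemann-style sum with one term per element is comparable, up to a uniform constant, to the continuous integral. The 1-Lipschitz property of $\hat\sigma$ (immediate from its definition as an infimum of distance plus $\sigma$) together with the matching $s_{M^*} = \Theta(\hat\sigma)$ ensures that $s_{M^*}$ varies by only a bounded factor within a single element, which closes the gap on the diamond-kite side; for an arbitrary competitor mesh $M$ one only needs the one-sided inequality $1/s_M(p)^2 \ge \Omega(1/\hat\sigma(p)^2)$ integrated over $R$, which is enough. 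Everything else — the area/$s_e^2$ equivalence, the walk argument for general bounded-aspect-ratio meshes, and the assembly of a two-sided $\Theta$ bound on $s_{M^*}$ from the two bullets — is routine.
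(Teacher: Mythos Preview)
Your approach is essentially identical to the paper's. The only difference is that the paper works with $\alpha(p)$, the \emph{area} of the element containing $p$, rather than $s_M(p)^2$; since $\alpha(p)$ is exactly constant on each element, one has the exact identity $\int_E 1/\alpha(p)\,dp = 1$, which sidesteps the sum-to-integral calibration you flagged as a potential obstacle.
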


\begin{proof}
Given a bounded-aspect-ratio mesh $M$, let $\alpha(p)$ denote the area of the element containing $p$. Then, for all elements $E$ of $M$, we have the identity
$$\int_E \frac{1}{\alpha(p)} dp=1.$$
Therefore, the number of elements in the mesh can be counted by
$$\int_M \frac{1}{\alpha(p)} dp.$$
However, $1/\alpha(p)$ is lower-bounded by $\Omega((\hat\sigma(p))^{-2})$ for all bounded-aspect-ratio meshes, and upper-bounded by $O((\hat\sigma(p))^{-2})$ for diamond-kite meshes; therefore, the number of elements in the diamond-kite mesh determined by size function $\sigma$ is within a constant factor of optimal.
\end{proof}

\begin{theorem}
The diamond-kite mesh for a given size function has total edge length within a constant factor of the minimum possible for any bounded-aspect-ratio mesh for the same size function.
\end{theorem}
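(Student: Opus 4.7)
The plan is to mirror the structure of the preceding area-based proof, but replace area density $1/\alpha(p)$ with a length density proportional to $1/\sqrt{\alpha(p)}$. Specifically, for any mesh $M$ with bounded-aspect-ratio elements, the perimeter of each element $E$ is within constant factors of the side length, which in turn is within constant factors of $\sqrt{\alpha(E)}$ (where $\alpha(E)$ denotes the element's area). Since each interior edge belongs to exactly two elements and each boundary edge to one, the total edge length of $M$ satisfies
$$\sum_{e\in M}\mathrm{length}(e)=\Theta\!\left(\sum_{E\in M}\sqrt{\alpha(E)}\right)=\Theta\!\left(\int_M\frac{dp}{\sqrt{\alpha(p)}}\right),$$
using the identity $\int_E \alpha(p)^{-1/2}\,dp=\sqrt{\alpha(E)}$ that holds because $\alpha$ is constant within each element.

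Next I would plug in the two observations stated just before the size-optimality theorem. For any bounded-aspect-ratio mesh, $\alpha(p)^{-1/2}=\Omega\bigl(\hat\sigma(p)^{-1}\bigr)$, and for the diamond-kite mesh determined by $\sigma$, $\alpha(p)^{-1/2}=O\bigl(\hat\sigma(p)^{-1}\bigr)$. Integrating these pointwise bounds over the meshed region gives
$$\sum_{e\in M}\mathrm{length}(e)=\Omega\!\left(\int\frac{dp}{\hat\sigma(p)}\right)$$
for every bounded-aspect-ratio $M$ conforming to $\sigma$, and the matching $O$-bound for the diamond-kite mesh; dividing yields the constant-factor comparison.

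The only delicate step is the edge-length versus perimeter accounting: boundary edges are counted only once, and in a kite (which is not regular) the perimeter is a constant times any single side, so the conversion from per-element perimeters to $\sqrt{\alpha(E)}$ needs the bounded aspect ratio hypothesis to absorb the shape-dependent constants. I expect this to be the main bookkeeping obstacle, but it reduces to the observation that our diamonds and kites have only two congruence classes and therefore a uniform ratio between side length and $\sqrt{\mathrm{area}}$. Once that is handled, the proof reduces to the same integral comparison as in the size-optimality theorem.
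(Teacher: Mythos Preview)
Your proposal is correct and follows essentially the same integral-comparison argument as the paper. The only cosmetic difference is that the paper uses the density $\pi(p)/\alpha(p)$ (where $\pi(p)$ is the perimeter of the element containing $p$), which integrates \emph{exactly} to the total perimeter, and then invokes bounded aspect ratio to bound $\pi(p)/\alpha(p)$ by $\Theta(1/\hat\sigma(p))$; you instead invoke bounded aspect ratio one step earlier to replace $\pi(p)$ by $\Theta(\sqrt{\alpha(p)})$, yielding the density $1/\sqrt{\alpha(p)}$---the two are equivalent up to constants, and your edge-versus-perimeter bookkeeping worry is already absorbed by the same constant-factor slack.
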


\begin{proof}
Given a bounded-aspect-ratio mesh $M$, let $\pi(p)$ denote the perimeter of the element containing $p$. Then, for all elements $E$ of $M$, we have the identity
$$\int_E \frac{\pi(p)}{\alpha(p)} dp=\mathop{\mathrm{perimeter}}(p).$$
Therefore, the total perimeter of all the elements in the mesh is
$$\int_M \frac{\pi(p)}{\alpha(p)} dp.$$
However, $\pi(p)/\alpha(p)$ is lower-bounded by $\Omega(1/\hat\sigma(p))$ for all bounded-aspect-ratio meshes, and upper-bounded by $O(1/\hat\sigma(p))$ for diamond-kite meshes; therefore, the total perimeter of the elements in the diamond-kite mesh determined by size function $\sigma$ is within a constant factor of optimal.
\end{proof}

\begin{figure}[t]
\centering\includegraphics[height=2.5in]{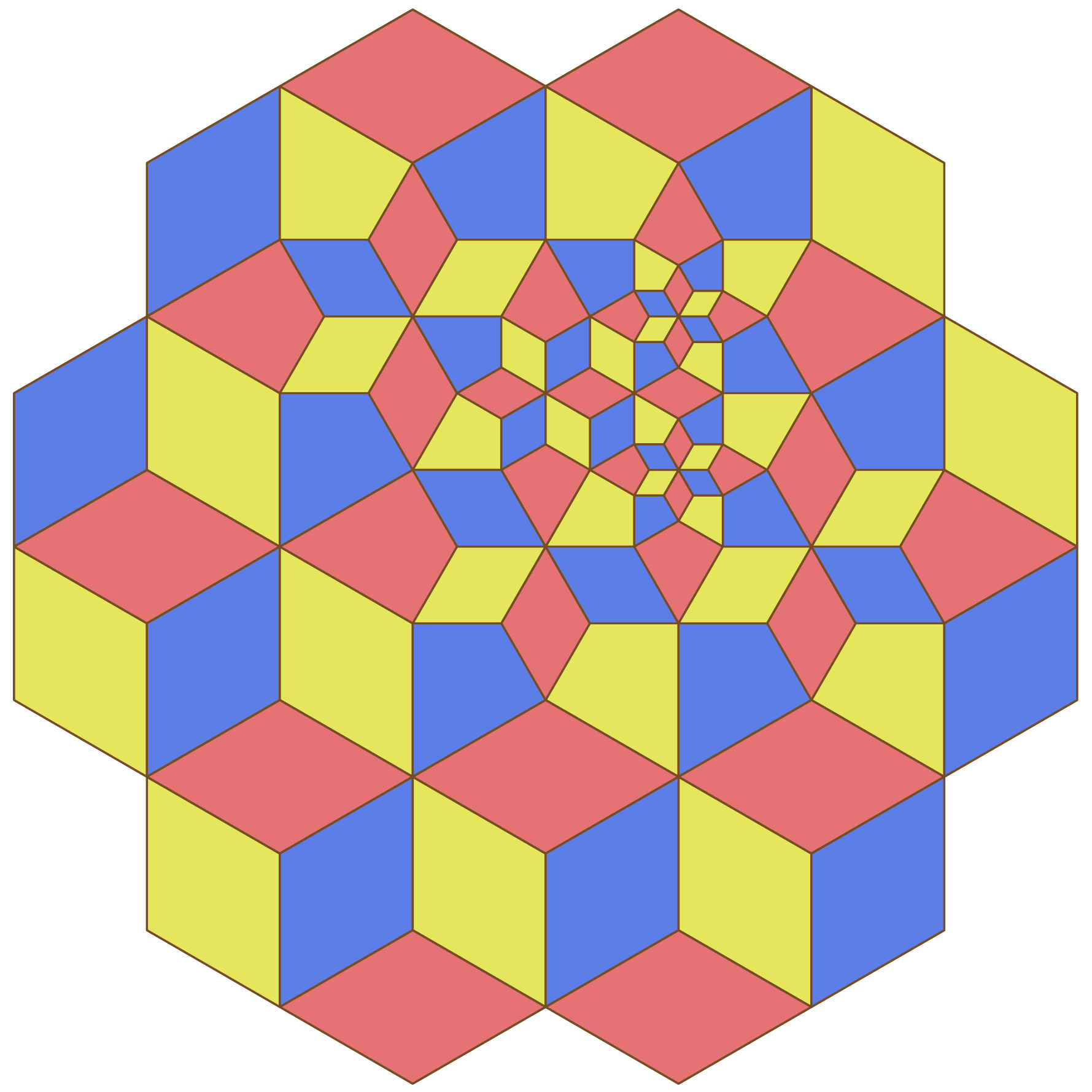}
\caption{A diamond-kite mesh,  3-colored according to the orientation of the diagonals in each quadrilateral.}
\label{fig:3-colored}
\end{figure}

\subsection{Coloring}
Graph colorings of meshes may be used to schedule batches of parallel updates to the values stored at mesh elements, in order to ensure that each two values that are updated in the same batch are independent from each other~\cite{BenFlaKri-AMHCS-92,BerEppHut-Algo-02}.
As with all planar graphs in which every face has an even number of sides, the vertices of a diamond-kite mesh may be colored with two colors, but in this context it is more relevant to color the faces of the mesh so that no two faces that share an edge have the same color. 

\begin{theorem}
The faces of every diamond-kite mesh may be colored with three colors.
\end{theorem}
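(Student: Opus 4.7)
The plan is to use the explicit coloring suggested by Figure~\ref{fig:3-colored}: color each face by the unordered pair of directions of its two perpendicular diagonals. Every diamond has two perpendicular diagonals (its two axes of symmetry), and every kite also has two perpendicular diagonals (its axis of symmetry connecting the $60^\circ$ and $120^\circ$ vertices, and the chord between the two $90^\circ$ vertices). A short induction on replacement steps shows that all diagonal directions lie in the set $\{0^\circ, 30^\circ, 60^\circ, 90^\circ, 120^\circ, 150^\circ\}$, so modulo perpendicularity we obtain exactly three classes: $\{0^\circ, 90^\circ\}$, $\{30^\circ, 120^\circ\}$, and $\{60^\circ, 150^\circ\}$, which I take as the three colors.

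To verify properness, I would argue by induction on the number of local replacements used to obtain the mesh from the initial rhombille tiling. The base case is the well-known $3$-coloring of the rhombille tiling by the three orientations of its rhombi: every edge runs between a degree-three and a degree-six vertex, and at the degree-three vertex three rhombi of three distinct orientations meet, so the two rhombi sharing the edge must differ in orientation, and the three orientations correspond precisely to the three diagonal-pair classes above.

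For the inductive step, consider a local replacement at a degree-six vertex $A$. The six faces previously incident to $A$ form a cycle whose colors alternate among the three classes, and after replacement these six faces remain single quadrilaterals with unchanged exterior adjacencies, so their colors remain valid. The six new small rhombi inside the hexagon form a miniature rhombille pattern, rotated by $30^\circ$ from the surrounding tiling, in which $A$ has become a degree-three vertex; they can be colored among themselves by the base case. The key geometric fact linking the two pieces is that a $30^\circ$ rotation cyclically permutes the three diagonal-pair classes, which is exactly what is needed to ensure that each new rhombus has a different color than the boundary face it meets.

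The principal obstacle is this last boundary-compatibility check; I expect it to be handled by a short case analysis of the four local configurations in Figure~\ref{fig:refine-cases} together with the already-refined case. In each configuration, the surrounding alternation forces a unique $3$-coloring of the new rhombi, and the $30^\circ$ cyclic permutation property guarantees that this forced coloring is consistent with the old faces across the hexagon boundary.
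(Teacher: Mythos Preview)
Your coloring is exactly the one the paper uses: two faces get the same color precisely when their diagonals are parallel, and the three classes you name are the three orientation classes the paper refers to. Where you diverge is in the verification. The paper does \emph{not} induct on replacement steps; it simply asserts the geometric fact that no two edge-adjacent quadrilaterals in a diamond-kite mesh can have parallel diagonals, from which properness is immediate. Your inductive route reaches the same conclusion but with considerably more work, and the sketch as written has two slips worth fixing.

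First, after a replacement at the center $A$, the vertex $A$ does \emph{not} become a degree-three vertex of the new miniature rhombille pattern: it is the degree-six center of the six small rhombi, with six new neighbors $N_1,\dots,N_6$ lying on the rays from $A$ toward the neighboring hexagon centers. The hexagon corners $B_i$ are the ones that pick up $60^\circ$ angles in the new rhombi.

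Second, your sentence ``these six faces remain single quadrilaterals with unchanged exterior adjacencies, so their colors remain valid'' is not quite the right justification. Unchanged exterior adjacencies tells you the \emph{constraints} from outside are unchanged, but the color of a face is defined by its diagonal directions, and the replacement alters each boundary face (a rhombus becomes a kite, or a kite becomes a small rhombus). You need the observation that the modified boundary face keeps its axis of symmetry along the same line $A$--$D_i$ as before, hence its diagonal-direction pair and therefore its color are preserved. Once that is checked, your boundary-compatibility step goes through: each new small rhombus has its long diagonal along a $B_i$-direction, $30^\circ$ off from the $D_i$-direction of the adjacent boundary face, so the colors differ.

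So your plan is sound and recovers the paper's coloring, but the paper's one-line argument---adjacent faces never have parallel diagonals---gets there without the induction or the case analysis you anticipate needing.
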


\begin{proof}
We may find such a coloring by the following simple strategy: define an equivalence relation on the quadrilaterals of the mesh, according to which two quadrilaterals are equivalent when their diagonals are parallel, and assign one color to each equivalence class. There are only three equivalence classes: quadrilaterals in two different equivalence classes will have their diagonals rotated by $30^\circ$ from each other, and after three such rotations we return to the starting equivalence class. No two adjacent quadrilaterals in a diamond-kite mesh may have parallel diagonals, so adjacent quadrilaterals are always assigned distinct colors. Therefore, the result is a proper 3-coloring, as depicted in Figure~\ref{fig:3-colored}.
\end{proof}

In contrast, other kinds of quadrilateral mesh may sometimes require four colors; for example, this is true of the quadrilateral mesh depicted in Figure~\ref{fig:4-chromatic}. In the figure, the red quadrilateral at the upper left shares two neighbors with the yellow quadrilateral in the right corner of the figure; these shared neighbors are also adjacent to each other. If the mesh could be 3-colored, these two neighbors would force the red and yellow quadrilaterals to have the same color as each other. But by a symmetric argument, if the mesh could be 3-colored, the blue quadrilateral in the lower left and the yellow quadrilateral in the right corner would also have to have the same color as each other. Therefore the two leftmost quadrilaterals would have to be colored the same as each other, not possible in a 3-coloring as they share an edge. Therefore, at least four colors are necessary for this mesh, but as shown in the figure four colors are also sufficient. More generally, all quadrilateral meshes require at most four colors, a fact that follows either from the four-color theorem for planar graphs or from Brooks' theorem on coloring regular graphs.

\begin{figure}[t]
\centering\includegraphics[height=1.75in]{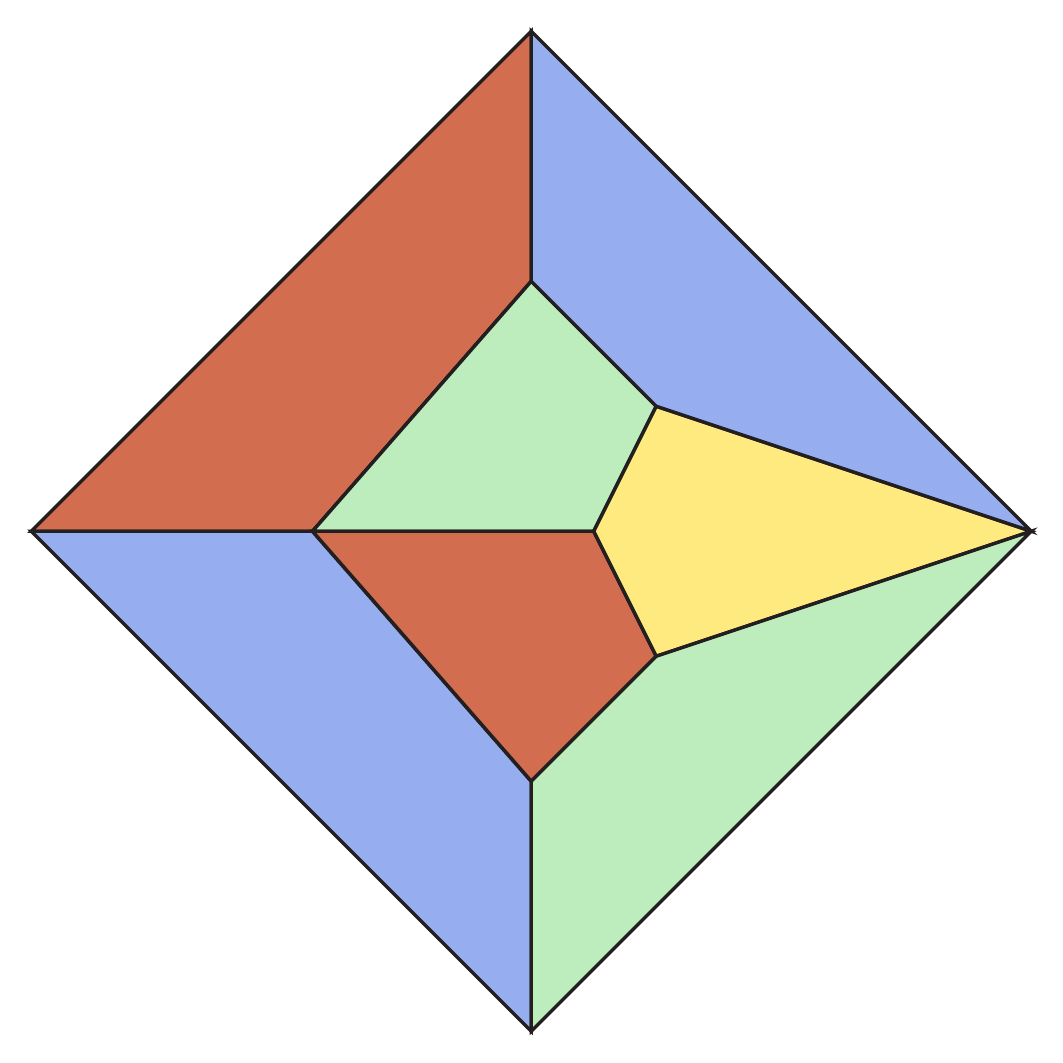}
\caption{A quadrilateral mesh that cannot be colored with fewer than four colors.}
\label{fig:4-chromatic}
\end{figure}

\subsection{Orthogonal circle packing}

A famous and deep theorem of Koebe, Andreev, and Thurston~\cite{And-MSN-70-a,And-MSN-70-b,Koe-BSAWL-36,Thu-MSRI-02} asserts that the vertices of every planar graph may be represented by a \emph{circle packing}, a system of circles with disjoint interiors, such that two vertices are adjacent in the graph if and only if the corresponding two circles are tangent. This representation is not unique without additional constraints (for instance, a 4-cycle has infinitely many distinct representations as a set of four tangent circles) but it can be made unique, up to M\"obius transformations, in one of two different ways:
\begin{itemize}
\item Let $G$ be constrained to be a \emph{maximal} planar graph; that is, every face of $G$, including the outer face, must be a triangle. Then its representation as a circle packing exists and is unique up to M\"obius transformations (Thurston, Corollary 13.6.2). We call this a \emph{maximal circle packing}. An example is shown in Figure~\ref{fig:packings}, left.
\item Alternatively, let $G$ be a \emph{$3$-vertex-connected} planar graph. It has a unique planar embedding; let $G'$ be the dual graph of this embedding. Then it is possible to represent both $G$ and $G'$ by simultaneous circle packings with the property that, for every edge $e$ of $G$ and its corresponding dual edge $e'$, the two circles representing the endpoints of $e$ have the same point of tangency as the two circles representing the endpoints of $e'$ and, moreover, the circles for $e$ cross the circles for $e'$ at right angles at this point. Again, this representation is unique up to M\"obius transformations~\cite{BriSch-SJDM-93}, and we call it an \emph{orthogonal circle packing}. An example is shown in Figure~\ref{fig:packings}, right.
\end{itemize}

\begin{figure}[t]
\centering\includegraphics[width=1.5in]{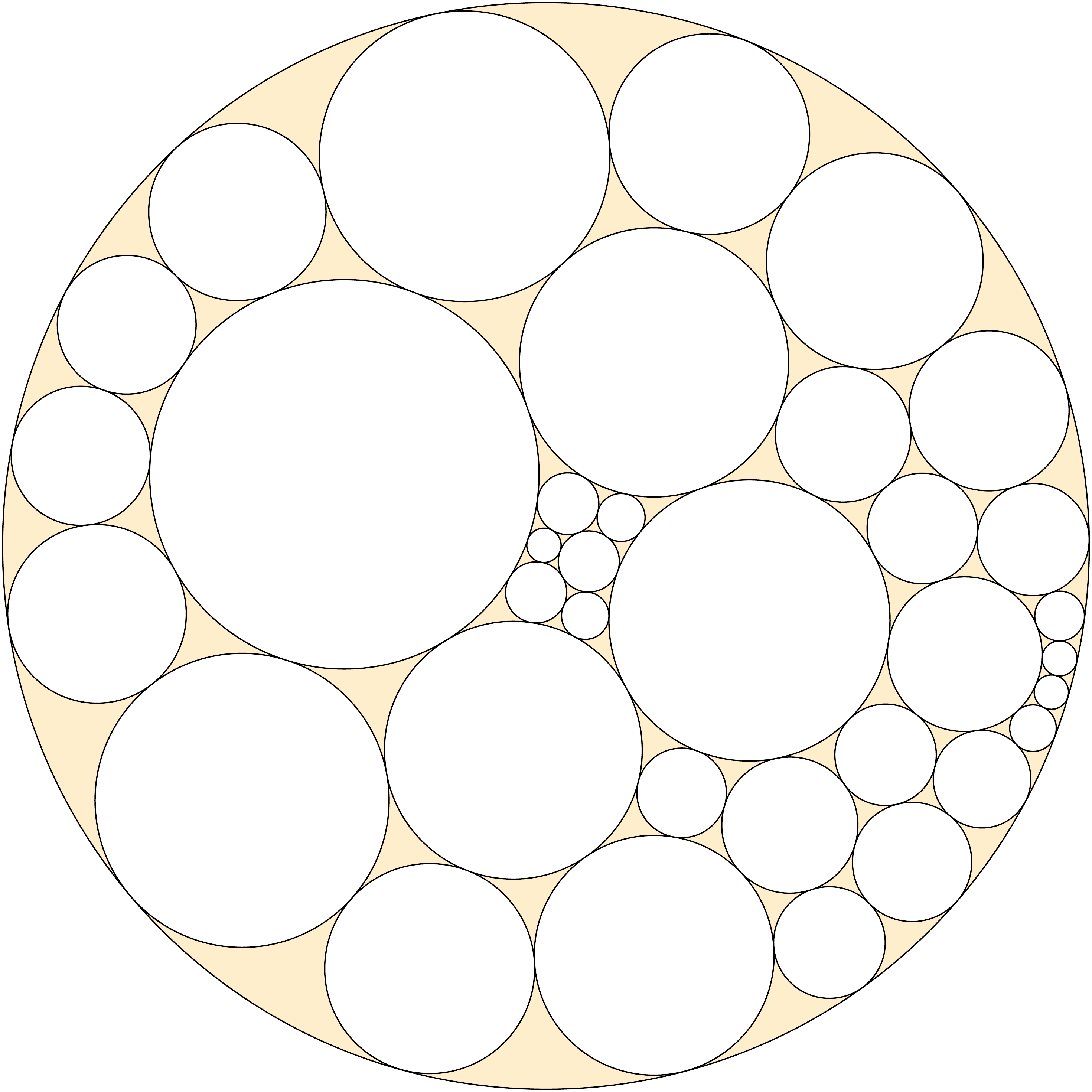}
\qquad\includegraphics[width=1.5in]{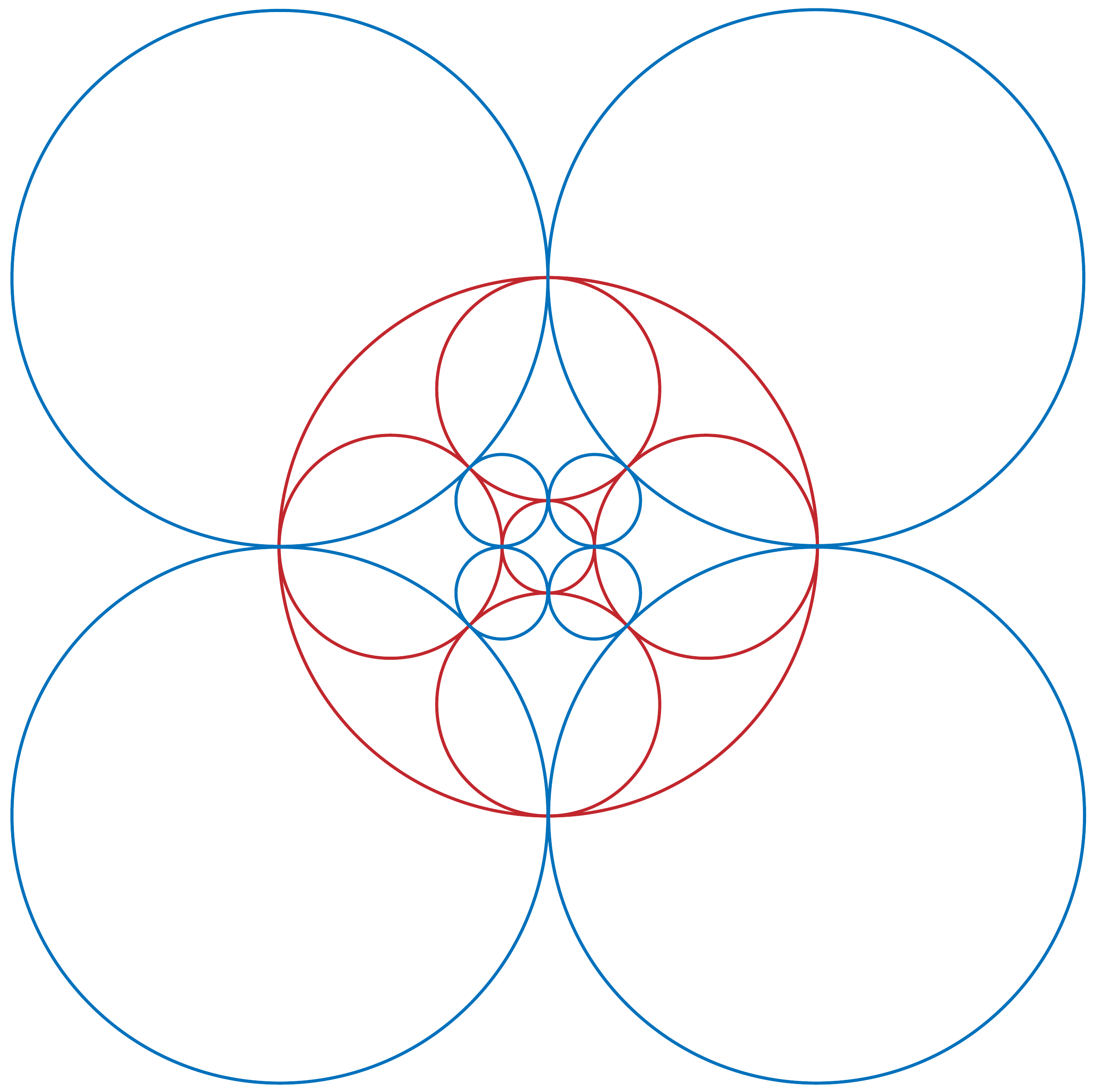}
\caption{A maximal circle packing (left) and an orthogonal circle packing (right).}
\label{fig:packings}
\end{figure}

Circle packings of these types have been applied in the field of graph drawing, to find drawings of planar graphs with right angle crossings~\cite{BriSch-SJDM-93}, high angular resolution~\cite{Epp-GD-12,MalPap-SJDM-94}, and small numbers of distinct slopes~\cite{KesPacPal-GD-10}. They have also been used for mesh partitioning~\cite{EppMilTen-FI-95,MilTenThu-JACM-97,MilTenThu-SJSC-98},
for visualization of brain structures~\cite{HurBowSte-MICCAI-99}, for analyzing the structure of soap bubbles~\cite{Epp-bubbles}, for solving differential equations~\cite{He-TAMS-90}, for constructing Riemann surfaces from combinatorial data~\cite{BowSte-MAMS-04}, and for finding approximations to conformal mappings between different simply connected domains, which can be used as an important step in structured mesh generation~\cite{RodSul-JDG-87,Ste-CMFT-97}.

The two constrained forms of circle packing guaranteed to exist by the circle packing theorem would seem to be also a natural fit for unstructured mesh generation: in a maximal circle packing, the graph of adjacencies between tangent circles (with its vertices placed at the triangle centers) forms an unstructured triangle mesh, and in an orthogonal circle packing, the graph of adjacencies between orthogonal circles forms an unstructured quadrilateral mesh. Additionally, if the degree of a graph is bounded, then the circle packings generated from it are naturally graded in size: adjacent circles have radii whose ratio is bounded, and the triangular or quadrilateral elements derived from the packing have bounded aspect ratio. However, despite their obvious appeal, these types of circle packing have not been used in mesh generation, because the geometry of a circle packing is difficult to control: circle packings are generated from combinatorial data (a graph) rather than from geometric data (the shape of a domain to be meshed) and in general, a small localized change to the graph from which the circle packing is generated can lead to large and non-localized changes to the packing.

\begin{figure}[t]
\centering\includegraphics[width=3in]{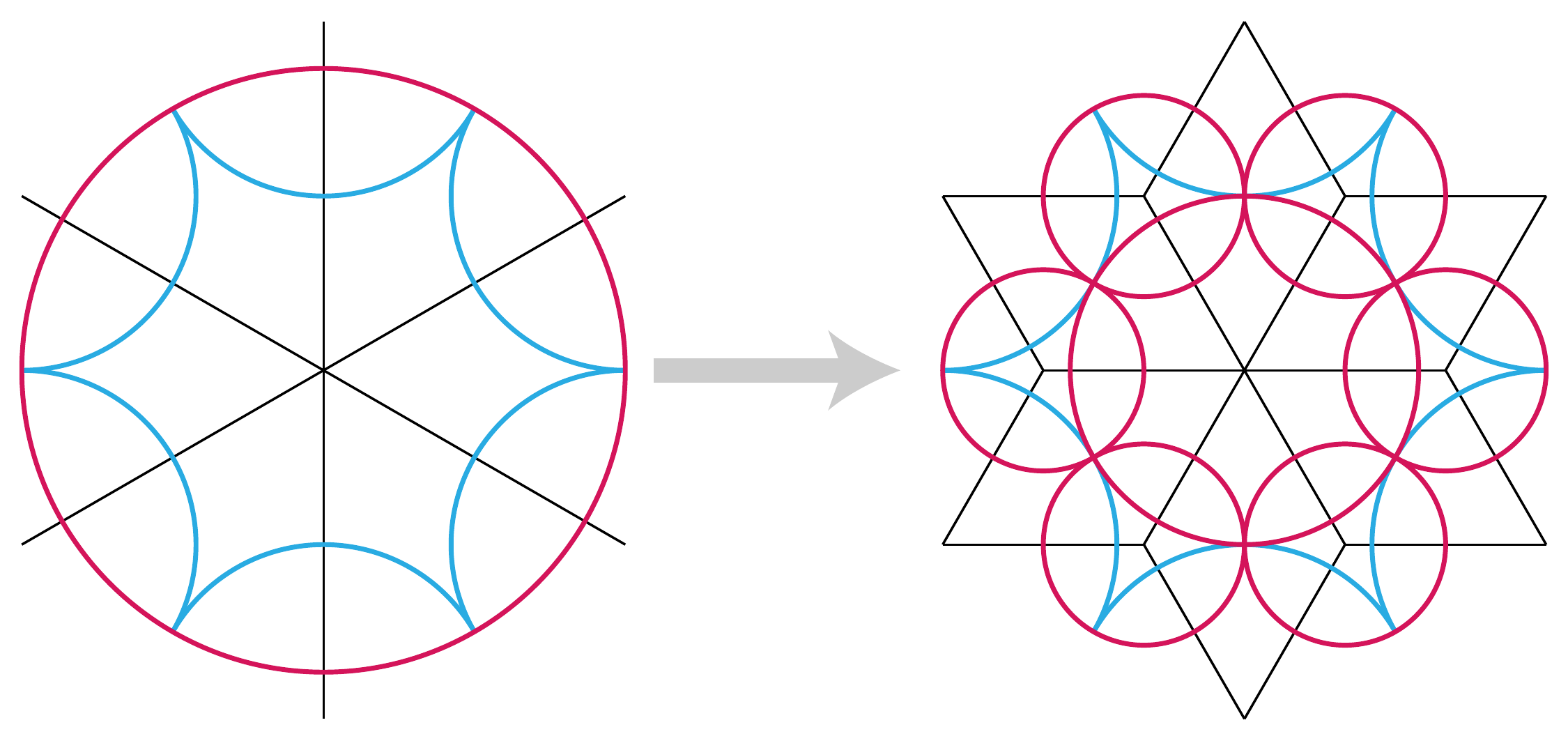}
\caption{Changes to the system of circular arcs within each quadrilateral caused by a single local replacement step.}
\label{fig:replacement-circles}
\end{figure}

Using diamond-kite meshes we show for the first time that it is possible to construct orthogonal circle packings, one of the two types of circle packing guaranteed to exist by the Koebe--Andreev--Thurston theorem, with geometric rather than graph-theoretic control of the position and size of the circles.

\begin{theorem}
The vertices of every diamond-kite mesh form the centers of the circles in an orthogonal circle packing.
\end{theorem}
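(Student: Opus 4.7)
The plan is to construct explicit radii $r(v)$ for the circle centered at each mesh vertex and to verify by induction on the sequence of local replacement steps that produced the mesh (from the initial rhombille tiling) the three conditions stated in the introduction: circles at adjacent vertices cross orthogonally, so $r(u)^2+r(v)^2=|uv|^2$ on every mesh edge; circles at non-adjacent vertices of a common quadrilateral are tangent, so $r(u)+r(v)=|uv|$ on every quadrilateral diagonal; and all other pairs of circles are disjoint.

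For the base case of the unrefined rhombille tiling with side $s$, I assign radius $s\sqrt{3}/2$ to each degree-six (acute-angle) vertex and $s/2$ to each degree-three (obtuse-angle) vertex. The identity $3s^2/4+s^2/4=s^2$ then holds on every edge, and the tangencies $s\sqrt{3}/2+s\sqrt{3}/2=s\sqrt{3}$, $s/2+s/2=s$ hold on the long and short diagonals respectively. More generally, within any kite with long side $\ell$, short side $\ell/\sqrt{3}$, and angles $60^\circ, 90^\circ, 90^\circ, 120^\circ$, a short coordinate calculation shows that the radii $\ell\sqrt{3}/2, \ell/2, \ell/2, \ell\sqrt{3}/6$ at the four vertices simultaneously satisfy all four edge-orthogonality equations and both diagonal-tangency equations. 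Thus every diamond and every kite admits a correct per-element radius assignment depending only on its long side $\ell$ and the type of angle at each vertex. For the inductive step, when a replacement is performed at a hexagon of center $p$ and side $s$, I update $r(p)$ from $s\sqrt{3}/2$ to $s/2$, assign radius $s\sqrt{3}/6$ to each of the six new interior $120^\circ$-vertices, and leave all other radii unchanged.

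The main obstacle is verifying that these per-element formulas agree at every shared vertex after the update. The delicate case is an old perimeter $120^\circ$-vertex $A$ of the replaced hexagon, which simultaneously plays three roles: a $120^\circ$-corner of an unchanged external diamond of side $s$, a $60^\circ$-corner of a new small diamond of side $s/\sqrt{3}$, and a $90^\circ$-corner of each of the two boundary kites of long side $s$. The three per-element formulas give $s/2$, $(s/\sqrt{3})\sqrt{3}/2=s/2$, and $\ell/2=s/2$, which all agree. A short case analysis paralleling Figure~\ref{fig:refine-cases} handles the remaining shared-vertex configurations, including those arising when several adjacent hexagons have been refined together and boundary kites degenerate into small diamonds. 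Disjointness of non-incident pairs follows from the observation that each disk lies inside the union of the quadrilaterals incident to its center vertex (a regular hexagon in the base case, whose inradius exceeds the circle's radius), and two vertices sharing neither an edge nor a quadrilateral have interiors of such unions that are disjoint. Combined with the inductive hypothesis on the coarser mesh, this establishes the orthogonal circle packing for the refined mesh and completes the induction.
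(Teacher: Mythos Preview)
Your argument is correct and follows essentially the same route as the paper: both proofs assign to each vertex the radius equal to its distance to the diagonal-crossing point of any incident quadrilateral (you express this via explicit angle-dependent formulas, the paper via ``arcs meeting at the center of the quadrilateral''), and both verify consistency across shared vertices by induction on local replacement steps. The orthogonality and tangency conditions you check algebraically are exactly what the paper obtains geometrically from the orthodiagonality of rhombi and kites; your treatment is somewhat more explicit, and you additionally sketch the disjointness of non-incident circles, which the paper's proof omits.
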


\begin{proof}
In each quadrilateral of a diamond-kite mesh, place arcs of four circles, centered at the quadrilateral's four vertices and meeting at the center of the quadrilateral. Then, the circular arcs for the quadrilaterals meeting at a vertex will necessarily link up to form a single circle. In the unsubdivided rhombille tiling, this is true because the quadrilaterals sharing a vertex are all rotated images of each other, and it remains true in each of the local replacement steps by which the subdivided tiling is formed. As shown in Figure~\ref{fig:replacement-circles}, the circular arcs surrounding each vertex of the replaced hexagon (shown as green in the figure) retain their previous radius, and the circular arcs surrounding the center vertex and each newly added vertex (shown as violet) meet up to form a circle that lies entirely within the replacement region.

The circles formed in this way meet in tangent pairs at the points within each quadrilateral where the diagonals cross, and (because the rhombs and kites of the mesh are both orthodiagonal) the two pairs of tangent circles meeting at each crossing point are orthogonal to each other. Thus, the result is an orthogonal circle packing.
\end{proof}

In this packing, every two orthogonal circles have radii differing by a factor of exactly $\sqrt 3$, and every two tangent circles have radii differing by a factor of at most three.
Figure~\ref{fig:circle-packing} shows a larger example.

\begin{figure}[t]
\centering\includegraphics[width=3in]{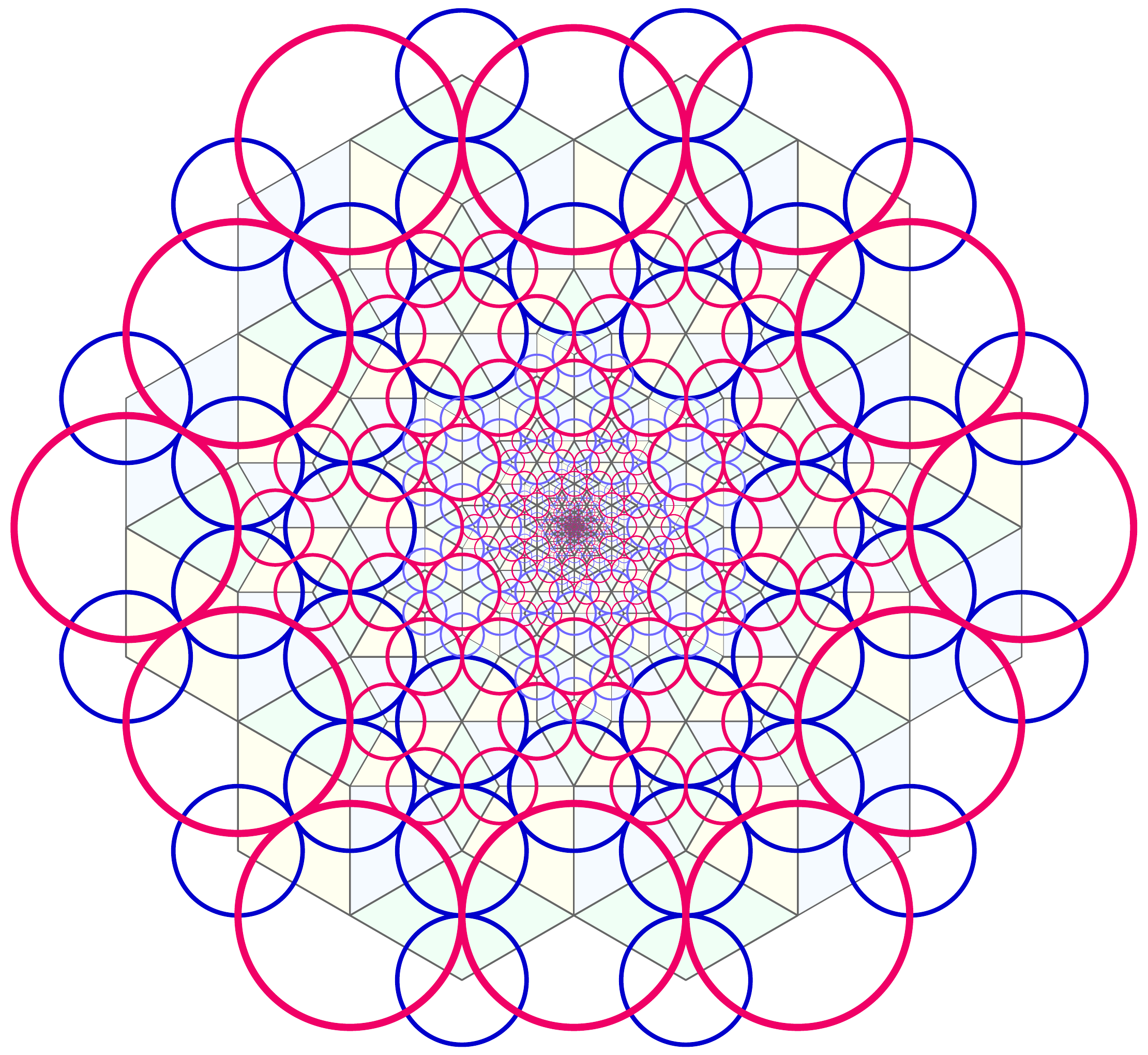}
\caption{A diamond-kite mesh with quadrilaterals of many different scales and its circle packing.}
\label{fig:circle-packing}
\end{figure}

\subsection{Laplacian smoothing}

\emph{Mesh smoothing} is a mesh improvement method in which mesh vertices are moved, without changing the mesh topology, in order to improve the shapes of the mesh elements.
In \emph{Laplacian smoothing}~\cite{Her-JEMD-76}, each mesh vertex is moved to the centroid of its neighbors. It dates back to the work of Tutte~\cite{Tut-PLMS-63}, who showed that for two-dimensional convex domains with boundary vertices fixed in place, this method converges to a unique optimal mesh in which every element is convex. The mesh to which this process converges is called a \emph{Tutte embedding}; it has the property that, if Laplacian smoothing is applied to it, every vertex remains in the same position.
Although other methods including centroidal Voronoi tessellation~\cite{DuGun-AMC-02} and optimization-based mesh smoothing~\cite{AmeBerEpp-Algs-99} have been applied to triangle meshes, and non-linear quasi-Newton smoothing has been shown to be more effective at removing concavities from quadrilateral meshes~\cite{VerTau-IMR-12}, Laplacian smoothing remains a popular choice for quadrilateral mesh smoothing.

As we now argue, diamond-kite meshes already form Tutte embeddings: each vertex lies at the exact centroid of its neighbors, so Laplacian smoothing can make no change to the mesh. In other words, from the point of view of Laplacian smoothing, diamond-kite meshes are already optimally smooth.

\begin{theorem}
In a diamond-kite mesh, every vertex lies at the centroid of its neighbors.
\end{theorem}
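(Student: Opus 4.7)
My plan is to prove this by induction on the number of local replacement steps used to generate the mesh from the initial rhombille tiling. For the base case the mesh is the unrefined rhombille, in which every vertex is either a degree-six vertex with six edges at $60^\circ$ angular intervals of equal length, or a degree-three vertex with three edges at $120^\circ$ intervals of equal length; in either case the neighborhood is preserved by a rotation of the same order as the degree, which forces the centroid of the neighbors to coincide with the vertex.

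For the inductive step, suppose the centroid property holds in a mesh $T$, and let $T'$ be obtained from $T$ by a single replacement at a degree-six vertex $A$, the center of a hexagon $H$ with corners $B_0,\ldots,B_5$ at distance $s$ from $A$. The replacement deletes the six edges $AB_i$ and creates six new mid-vertices $B_0',\ldots,B_5'$, where $B_i'$ lies at distance $s/\sqrt{3}$ from $A$ at angle $60^\circ i+30^\circ$ and is adjacent to $A$, $B_i$, and $B_{i+1}$. Vertices outside $H$ retain their neighborhoods, so by the inductive hypothesis only $A$, the six $B_i'$, and the six corners $B_i$ need to be checked. The cases of $A$ and $B_i'$ are immediate by symmetry: at $A$ the six new edges form a sixfold rotationally symmetric equal-length configuration, and at each $B_i'$ the three incident edges have equal length $s/\sqrt{3}$ and meet at $120^\circ$ angles, so each of these vertices is the centroid of its neighbors.

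The main case is a corner $B_i$. The replacement deletes only the single edge $B_iA$ from its neighborhood and adds only the two edges $B_iB_{i-1}'$ and $B_iB_i'$, leaving all other incident edges of $B_i$ unchanged. Since the sum of the neighbor displacements at $B_i$ in $T$ was zero by the inductive hypothesis, the centroid property in $T'$ reduces to the single vector identity
$$(B_{i-1}' - B_i) + (B_i' - B_i) = A - B_i,$$
which is equivalent to the assertion that the midpoint of $B_{i-1}'$ and $B_i'$ coincides with the midpoint of segment $AB_i$. I expect this identity to be the sole place where specific geometry, rather than symmetry or bookkeeping, enters the argument, and it is straightforward to verify: $B_{i-1}'$ and $B_i'$ both lie on the circle of radius $s/\sqrt{3}$ centered at $A$, at angles $30^\circ$ above and below the ray $AB_i$, so their midpoint lies on $AB_i$ at distance $(s/\sqrt{3})\cos 30^\circ = s/2$ from $A$, which is precisely the midpoint of $AB_i$ since $|AB_i|=s$. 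With this identity in hand, the sum of neighbor displacements at $B_i$ is preserved by the replacement, so $B_i$ remains at the centroid of its neighbors in $T'$, completing the inductive step and hence the theorem.
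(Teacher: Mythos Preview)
Your proof is correct and follows essentially the same approach as the paper: induction on replacement steps, with the same base-case symmetry argument and the same three-case analysis (center, new midpoints, hexagon corners) for the inductive step. The only cosmetic difference is that the paper phrases your midpoint identity for the corner case as ``$e_2$ and $e_3$ are two sides of a rhombus with $e_1$ as its diagonal,'' which is exactly your equation $(B_{i-1}'-B_i)+(B_i'-B_i)=A-B_i$ stated geometrically.
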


\begin{proof}
It is straightforward to verify that in the rhombille tiling used as the starting point for a diamond-kite mesh, each vertex lies at its neighbors' centroid: each vertex $v$ has either three or six equally distant neighbors, and the edges to those neighbors are equally spaced around $v$, so the vectors representing the differences in position between $v$ and its neighbors sum to zero. When we perform the replacement step depicted in Figure~\ref{fig:local-replacement}, this property is preserved for every vertex $v$, as can be seen by a simple case analysis:
\begin{itemize}
\item If $v$ is the central vertex of a replacement step, it is surrounded after the replacement by six equally spaced and equal length edges, so just as in the initial mesh the vectors representing the differences in position between $v$ and its neighbors sum to zero.
\item If $v$ is one of the six newly added neighbors of the central vertex, it is surrounded after the replacement by three equally spaced and equal length edges, and the same argument applies.
\item If $v$ is one of the six vertices of the hexagon within which the replacement happens, its neighborhood is changed by the removal of an edge $e_1$ and the addition of two edges $e_2$ and $e_3$, with length $1/\sqrt 3$ times that of $e_1$ and forming angles of $\pm 30^\circ$ with $e_1$.  The two new edges $e_2$ and $e_3$ form two consecutive sides of a rhombus, and $e_1$ is the diagonal of the rhombus that lies between them. As with any rhombus (or more generally any parallelogram) the vectors from $v$ to the endpoints of $e_2$ and $e_3$ sum to give the vector from $v$ to the endpoint of $e_1$, so the refinement step does not change the sum of the vectors from $v$ to its neighbors, which remains zero. 
\item In all other cases, the neighbors of $v$ are not changed by the refinement step.
\end{itemize}
Because the initial coarse mesh used as the starting point for producing a diamond-kite mesh is a Tutte embedding, and because every refinement step preserves this property, it follows that every diamond-kite mesh is a Tutte embedding.
\end{proof}

\section{Well-centered meshes}

\begin{figure}[t]
\centering\includegraphics[height=2.5in]{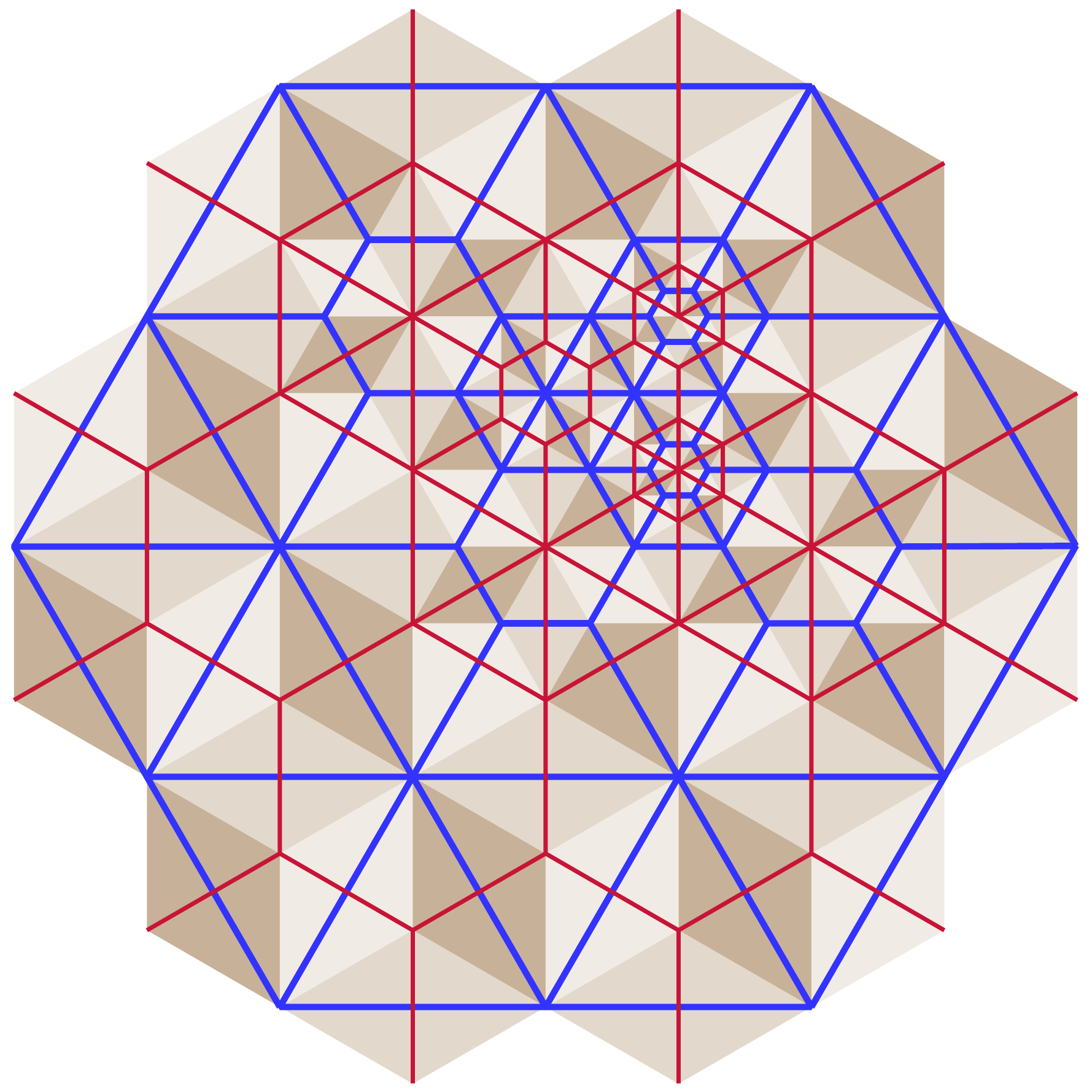}
\caption{Two dual well-centered meshes (the red and blue edges) generated from the diagonals of a diamond-kite mesh (shown by the light brown shading).}
\label{fig:well-centered}
\end{figure}

Vanderzee, Hirani, and Guoy~\cite{VanHirGuo-SJSC-09} have studied \emph{well-centered} triangle meshes, meshes in which the circumcenter of every triangle is interior to the triangle, and their higher-dimensional generalizations. In the plane, such a mesh must be the Delaunay triangulation of its vertices, and its dual is their Voronoi diagram. The Voronoi vertices (dual to the Delaunay triangles) each lie within the interiors of their triangles by definition, and (as with every Voronoi diagram) the triangle vertices lie within the interiors of their Voronoi cells. Additionally, every edge of the Delaunay triangulation crosses the corresponding dual edge of the Voronoi diagram, and the crossing is at right angles. As Vanderzee et al. argue, this orthogonal crossing property of the primal and dual meshes is important for certain numerical methods, including the covolume method, the discrete exterior calculus, and space-time meshing.

Diamond-kite meshes can be used to generate a similar pair of dual meshes, with the same well-centered properties: each primal vertex is interior to the corresponding dual face, each dual vertex is interior to the corresponding primal face,  each primal-dual pair of edges crosses orthogonally, and there are no other crossings. To do so, observe that (because it is a planar graph with faces that have an even number of sides) the diamond-kite vertices and edges form a bipartite graph. Each diagonal of one of the quadrilaterals in the mesh connects two vertices of the same color. Therefore, the set of all diagonals can be partitioned into two disjoint meshes, one of each color (Figure~\ref{fig:well-centered}). Each vertex of one color is surrounded by a cycle of diagonals of the other color (the diagonals that it is not an endpoint of among the diamonds and kites that surround it), so in these two diagonal meshes, each vertex of one mesh lies interior to a face of the other mesh. Each edge of one diagonal mesh crosses orthogonally a single dual edge of the other diagonal mesh, the one from the same diamond or kite. Therefore, these two diagonal meshes are well-centered in the sense of Vanderzee et al.

Like the underlying diamond-kite mesh, these two well-centered meshes have elements whose size varies proportionally to a given local size function. Unlike the well-centered triangle meshes (in which the Delaunay triangulation and the Voronoi diagram are meshes of two different types, and the shapes of the triangles and dual Voronoi cells may vary continuously) these meshes both have elements of the same four shapes: equilateral triangles, regular hexagons, isosceles trapezoids (half-hexagons), and the pentagons formed by gluing together an equilateral triangle and a hexagon.

\section{Conclusions}

We have defined the family of diamond-kite meshes based on a simple local replacement step starting with the rhombille tiling. In these meshes, the most acute angle is $60^\circ$, the most obtuse angle is $120^\circ$, and all elements have bounded aspect ratio. The element size can be controlled by a local size function, and the number of elements and total edge length of the elements is within a constant factor of optimal for the given size function. Replacement operations may be performed adaptively to handle time-dependent size functions. Unlike adaptive quadtree meshes, this system provides a quadrilateral mesh directly without any need for additional subdivision.

The vertices of our new meshes form the centers of an orthogonal circle packing of the type guaranteed to exist by the Koebe--Andreev--Thurston circle packing theorem, showing for the first time that it is possible to incorporate this type of circle packing into a meshing algorithm.

Much remains to be studied in this area. On the mathematical side, we still do not know whether it is possible to define an analogous local replacement scheme that would allow the generation of maximal circle packings (in which every gap between three circles has exactly three sides) with similar properties to those of the diamond-kite mesh, and in particular with the ability to control the size of the circles in one part of the packing without changing the geometry of the circles in distant parts of the packing. On the more practical side, it would be of interest to develop the diamond-kite method into a practical mesh generation system and to compare empirically the quality of meshes generated in this way with those from other comparable systems such as quadtrees and Delaunay refinement. One complication here is that diamond-kite meshes have edges with a small fixed set of orientations, making it difficult for them to conform to the boundary of a domain. We leave such developments to future research.

\raggedright
\ifJournal
\bibliographystyle{spmpsci}
\else
\bibliographystyle{abuser}
\fi
\bibliography{diamond-kite}

\end{document}